\numberwithin{equation}{section} 
\numberwithin{figure}{section} 
\theoremstyle{plain}
\newtheorem{theorem}{Theorem}[section]
\newtheorem{lemma}[theorem]{Lemma}
\newtheorem{proposition}[theorem]{Proposition}
\theoremstyle{definition}
\newtheorem{definition}[theorem]{Definition}
\theoremstyle{remark}
\newtheorem{remark}[theorem]{Remark}
\newcommand{\Graph}{\Gamma}
\newcommand{\G}{\Gamma}
\newcommand{\cH}{\mathcal{H}}
\newcommand{\Vertices}{\mathcal{V}}
\newcommand{\Edges}{\mathcal{E}}
\newcommand{\R}{\mathbb{R}}
\newcommand{\Reals}{\mathbb{R}}
\newcommand\term[1]{{\em #1\/}}
\DeclareMathOperator{\opt}{opt}
\newcommand\rr{\mathbb{R}}
\newcommand\vv{\mathcal{V}}
\newcommand\nn{\mathcal{N}}
\newcommand\cP{\mathcal{P}_{m}\left(\Gamma\right)}
\newcommand\cQ{\mathcal{Q}_{m}\left(\Gamma\right)}
\newcommand\tuple{\left(\varphi_{1},\ldots,\varphi_{\beta}\right)}
\newcommand\ttuple{\left(\tilde{\varphi}_{1},\ldots,\tilde{\varphi}_{\beta}\right)}
\newcommand\tga{\Gamma_{\varphi_{1},\ldots,\varphi_{\beta}}}
\newcommand\ttga{\Gamma_{\tilde{\varphi}_{1},\ldots,\tilde{\varphi}_{\beta}}}
\newcommand\tf{\tilde{\varphi}}
\newcommand\pp{(-\pi,\pi]}
\newcommand\at{\Big|}
\newcommand\cyclomatic{Betti }
\begin{document}

\title[The number of nodal domains and the stability of graph partitions]{On the
  connection between the number of nodal domains on quantum graphs and
  the stability of graph partitions}

\author{Ram~Band$^{1,\,2}$}

\address{$^{\text{1}}$Department of Physics of Complex Systems, Weizmann
Institute of Science, Rehovot 76100, Israel}

\address{$^{\text{2}}$School of Mathematics, University of Bristol,
  Bristol BS8 1TW, UK}

\author{Gregory~Berkolaiko$^{3}$}

\address{$^{\text{3}}$Dept of Mathematics, Texas A\&M
  University, College Station, TX 77843-3368, USA}

\author{Hillel Raz$^{4}$}

\address{$^{4}$Cardiff School of Mathematics and WIMCS, Cardiff University,
Cardiff CF24 4AG, UK}

\author{Uzy~Smilansky$^{1,\,4}$}

\begin{abstract}
Courant theorem provides an upper bound for the number of nodal
domains of eigenfunctions of a wide class of Laplacian-type operators. In
particular, it holds for generic eigenfunctions of quantum graph. The theorem
stipulates that, after ordering the eigenvalues as a non decreasing
sequence, the number of nodal domains  $\nu_n$ of the $n$-th
eigenfunction satisfies $n\ge \nu_n$. Here, we provide a new
interpretation for the Courant nodal deficiency $d_n = n-\nu_n$ in the case
of quantum graphs. It equals
the Morse index --- at a critical point --- of an energy functional on a
suitably defined space of graph partitions. Thus, the nodal
deficiency assumes a previously unknown and profound meaning- it is
the number of unstable directions in the vicinity of the critical
point corresponding to the $n$-th eigenfunction. To demonstrate this
connection, the space of graph partitions and the energy functional
are defined and the corresponding critical partitions are studied in
detail.

\end{abstract}
\maketitle

\section{Introduction}

Nodal domains are defined as the connected components that remain
after removing the set of points on which the eigenfunction is
zero. They are easy to observe experimentally and their study has a
rich history. For a review of the subject see, for example, the
collection of articles in \cite{Wittenberg2006} or the book
\cite{NodalSets_book} (in preparation) and references therein.  A
cornerstone in the study of nodal domains is Courant's theorem which
states that, after ordering the eigenvalues as a non-decreasing
sequence, the number $\nu_n$ of nodal domains of the $n$-th
eigenfunction is bounded from above by $n$
\cite{Cou23,CourantHilbert_volume1}.  It was later proven by Pleijel
\cite{Ple_cpam56} that for planar problems with Dirichlet boundary
conditions, ``Courant-sharp'' eigenfunctions that satisfy $\nu_{n}=n$
are finite in number, i.e.  extremely rare (see also
\cite{Pol_pams09}).  The sequence of \emph{nodal deficiencies},
$d_{n}=n-\nu_{n}$ is specific to the particular problem, and it was
recently discovered that this sequence encodes information about the
geometry of the manifold, much in the same way as the eigenvalue
spectrum does \cite{BluGnuSmi_prl02,KarSmi_jpa08}. 
Moreover, the information derived from the nodal count tends to
complement the information contained in the spectrum
and in certain cases it was shown that isospectral systems can be
resolved by their different nodal count
sequences\cite{GnuSmiSon_jpa05,BruKlaPuh_jpa07,Kla_jpa09}.  Thus, the
question ``can one \emph{count} the shape of a drum'' turned out to be
a useful paraphrase of Kac's famous question. The results mentioned
above were extended to quantum graphs, where a Pleijel-like theorem
does not hold but the intimate link between the nodal count sequence
and the graph geometry does
exist. \cite{BanShaSmi_jpa06,Ber_cmp08,BanOreSmi_incol08,BanBerSmi_prep10}.

Another link between the spectral and the nodal properties is achieved
by studying partitions of the domain into subdomains which minimize a
certain energy functional
\cite{BucButHen_amsa98,ConTerVer_cvpde05,CafLin_jsc07}.  Recently,
Helffer, Hoffmann-Ostenhof and Terracini \cite{HelHofTer_aihp09}
proved an important result which connects together the notion of a
minimal partition and the Courant bound.  They consider the Dirichlet
problem on a domain $\Omega$.  For a partition $P$ of $\Omega$ into
subdomains $D_j$, $j=1,\ldots,k$, they define the functional
\begin{equation}
  \label{eq:energy_func}
  \Lambda(P) = \max_{1\leq j \leq k} \lambda_1(D_j),
\end{equation}
where $\lambda_1(D_j)$ is the first eigenvalue of the Dirichlet
Laplacian on $D_j$.  The \emph{minimal partition} is defined as the partition on which the minimum of $\Lambda(P)$ over the set of all $k$-partitions is achieved.  A partition is \emph{bipartite} if its subdomains can be labeled with signs $\{+,-\}$ so that neighboring
domains have different signs.  It was shown in \cite{HelHofTer_aihp09} that a minimal partition is bipartite if and only if it corresponds to the eigenfunction that is Courant-sharp.\footnote{The related question on the meaning of the minimal partitions that are not bipartite is discussed in the review \cite{Hel_mjm10}.}

The result of Helffer \emph{et. al.} \cite{HelHofTer_aihp09} is
surprising and even somewhat mysterious.  It raises the natural
questions: Why only Courant-sharp eigenfunctions appear in the
discussion?  What about other eigenfunctions, which are, according to
Pleijel, the overwhelming majority?  In the present work we address
these questions in the context of quantum graphs which are defined in
Section \ref{sec:intro_quantum_graphs}. Quantum graphs have the
advantage of being simple to analyze without losing the complex
spectral features which mark the Laplacian spectra in more general
domains.  With respect to nodal domains, quantum graphs lie between
$d=1$ and higher dimensions.  On planar domains, Pleijel's theorem
implies that the nodal deficiency
is unbounded from above.\footnote{A related conjecture by T.~Hoffmann-Ostenhof that
  $\limsup_{n\to\infty} \nu_n = \infty$ for all domains is still
  unproven.}  
However, it was shown in \cite{Ber_cmp08} that for
quantum graphs the nodal deficiency is bounded from above by 
the number $\beta$ of independent cycles on the graph.  It turns out that this
feature makes graphs a very good model to study minimal partitions.
We review this and other related results in
Section~\ref{sec:intro_nodal_count} below.

The novel element which we introduce here is that the nodal deficiency
equals the Morse index of energy functional (\ref{eq:energy_func}) on
a suitably defined space of graph partitions. More precisely, we
restrict our attention to the so-called \emph{equipartitions} (see
definition \ref {eq:equipart}) on which the functional $\Lambda$
becomes differentiable.  Under not too restrictive assumptions to be
listed below, eigenfunctions are found to correspond to the bipartite
\emph{critical partitions} of the functional $\Lambda$.  Furthermore,
it turns out that the nodal deficiency of an eigenfunction coincides
with the number of unstable directions (Morse index) of the
corresponding critical partition.  In particular, a minimum has Morse
index 0 and therefore corresponds to an eigenfunction of deficiency 0,
i.e.\ a Courant-sharp eigenfunction.  Thus, our work extends the
result of Helffer, Hoffmann-Ostenhof and Terracini
\cite{HelHofTer_aihp09} to graphs and goes beyond it by interpreting
the nodal deficiency in a new way.

In Section \ref{sec:intro_quantum_graphs} we define quantum graphs
and their spectrum, in \ref{sec:intro_nodal_count} we review the nodal
count results on graphs.  The main results of the paper are presented
in Section \ref{sec:results} and proved in subsequent sections.  In
Section~\ref{sec:other_scenarios} we remove some restrictions we
imposed to keep the development simpler and discuss why other restrictions
cannot be removed.

\subsection{Quantum graphs}

\label{sec:intro_quantum_graphs}

In this section we describe the quantum graph which is a metric graph
with a Shr\"odinger-type self-adjoint operator defined on it. Let
$\Graph=(\Vertices,\Edges)$ be a graph with vertices
$\Vertices=\{v_{j}\}$ and edges $\Edges=\{e_{j}\}$.  The sets
$\Vertices$ and $\Edges$ are required to be finite.

We are interested in metric graphs, i.e.\ the edges of $\Graph$ are
$1$-dimensional segments with a positive finite length $L_{e}$.
On the edge $e=(u,v)$ we assign a coordinate, denoted $x_{e}$, which measures
the distance along the edge starting from one of its vertices. A metric
graph becomes quantum after being equipped with an additional structure:
assignment of a self-adjoint differential operator. This operator
will be often called the \emph{Hamiltonian}. In this paper we study
the zeros of the eigenfunctions of the
Schr\"odinger operator
\begin{equation}
  H\ :\ f(x)\mapsto-\frac{d^{2}f}{dx^{2}}+V(x)f(x),
  \label{E:schrod}
\end{equation}
where $x$ is the coordinate along an edge and $V(x)$ is a
\emph{potential}. We will assume that the potential $V(x)$ is bounded
and piecewise continuous.

To complete the definition of the operator we need to specify its
domain.

\begin{definition} \label{D:spaces} We denote by
  $\widetilde{H}^{2}(\G)$ the space \begin{equation*}
  \widetilde{H}^{2}(\G):=\bigoplus_{e\in\Edges}H^{2}(e),\end{equation*} which
  consists of the functions $f$ on $\G$ that on each edge $e$ belong
  to the Sobolev space $H^{2}(e)$. The restriction of $f$ to the edge
  $e$ is denoted by $f_{e}$. The norm in the space
  $\widetilde{H}^{2}(\G)$ is \begin{equation*}
  \|f\|_{\widetilde{H}^{2}(\G)}:=\sum\limits
  _{e\in\Edges}\|f_{e}\|_{H^{2}(e)}^{2}.\end{equation*}
\end{definition}

We assume that the domain of the Hamiltonian is a subspace of the
Sobolev space $\widetilde{H}^{2}(\G)$.  Note that in the definition of
$\widetilde{H}^{2}(\G)$   smoothness is enforced along edges only,
without any vertex conditions  at all.  All vertex
conditions that lead to the operator \eqref{E:schrod} being
self-adjoint have been classified in
\cite{KosSch_jpa99,Har_jpa00,Kuc_wrm04}.  The conditions involve the
values of the functions $f_e$ and their first derivatives at the
vertices, both of which are well defined by the standard Sobolev trace
theorem.  Since the direction is important for the first derivative,
we will henceforth adopt the convention that, at an end-vertex of an
edge $e$, the derivative is calculated \emph{into} the edge and away
from the vertex.

We will only be interested in the so-called extended $\delta$-type
conditions, since they are the only conditions that guarantee
continuity of the eigenfunctions, something that is essential if one
wants to study changes of sign of the said eigenfunctions.

\begin{definition} \label{def:domain} The domain $\cH$ of the operator
  \eqref{E:schrod} consists of the functions
  $f\in\widetilde{H}^{2}(\G)$ such that
  \begin{enumerate}
  \item $f$ is continuous on every vertex $v\in\vv$: \begin{equation*}
    f_{e_{1}}(v)=f_{e_{2}}(v),\end{equation*}
   for all edges $e_{1}$ and $e_{2}$ that have $v$ as an endpoint.
  \item the derivatives of $f$ at each vertex $v$
    satisfy
    \begin{equation}
      \sum_{e\in\Edges_{v}}\frac{df}{dx_{e}}(v)
      = \alpha_{v}f(v),\quad\alpha_{v}\in\Reals,
      \label{eq:delta_deriv}
    \end{equation}
    where $\Edges_{v}$ is the set of edges incident to $v$.
  \end{enumerate}
\end{definition}

Sometimes the condition \eqref{eq:delta_deriv} is written in a more
robust form
\begin{equation}
  \cos\left(\frac{\varphi_{v}}{2}\right)
  \sum_{e\in\Edges_{v}}\frac{df}{dx_{e}}(v)
  = \sin\left(\frac{\varphi_{v}}{2}\right)f(v), \qquad \varphi_v
  \in\pp,
  \label{eq:delta_deriv_tan}
\end{equation}
which is also meaningful for infinite values of
$\alpha_{v}=\tan\left(\frac{\varphi_{v}}{2}\right)$.  Henceforth we
will understand $\alpha_{v}=\infty$ as the Dirichlet condition
$f(v)=0$. The case $\alpha_{v}=0$ is often referred to as the
Neumann-Kirchhoff condition.

The operator \eqref{E:schrod} with the domain $\cH$ is self-adjoint
for any choice of real $\alpha_{v}$ (including $\alpha_{v}=\infty$).
Since we only consider compact graphs, the spectrum is real, discrete
and with no accumulation points. We will slightly abuse notation and
denote by $\sigma(\Gamma)$ the spectrum of an operator $H$ defined on
the graph $\Gamma$.  The vertex conditions will usually be clear from
the context.

The eigenvalues $\lambda\in\sigma(\Gamma)$ satisfy the equation
\begin{equation}
  -\frac{d^{2}f}{dx^{2}}+V(x)f(x)=\lambda f(x).
  \label{eq:eig_eq}
\end{equation}
It can be shown that under the conditions specified above the operator
$H$ is bounded from below \cite{Kuc_eds04}. Thus we can number the
eigenvalues in an ascending order, starting with $\lambda_1$. As the lowest
eigenvalue plays an important role in this paper, we adopt the physical
terminology and call it the \emph{groundstate energy} and its corresponding
eigenfunction, the \emph{groundstate}.

The Hamiltonian can also be discussed in terms of its quadratic
form \cite{Kuc_wrm04},
\begin{equation}
  h[f,f] = \sum_{e}\int|f'(x)|^{2}dx + \sum_{e}\int V(x)|f(x)|^{2}dx +
 \sum_{v}\alpha_{v}|f(v)|^{2}, \qquad f\in \widetilde{H}^{1}(\G).
  \label{eq:quadratic_form}
\end{equation}
As usual, the Dirichlet conditions (if any) are to be introduced
directly into the domain of the form rather than included in the last
sum of \eqref{eq:quadratic_form}.


The eigenvalues of the Hamiltonian can be obtained from the quadratic
form by applying the Rayleigh-Ritz minimax principle, for instance in
the form
\begin{equation}
  \lambda_{n}=\min_{\dim X=n}\
  \max_{f\in X:\,\|f\|=1}h[f,f],
  \label{eq:minimax}
\end{equation}
where the minimum is taken over all $n$-dimensional subspaces of the
domain of the quadratic form.

Finally, we would like to mention that the Neumann-Kirchhoff and
Dirichlet vertex conditions play an important role in this
paper. Dividing an edge into two parts by introducing a new vertex of
degree 2 will have no effect on the spectrum and eigenfunctions if we
impose the Neumann condition at the vertex.  Indeed, if $\alpha_v=0$
and the degree of $v$ is two, equation (\ref{eq:delta_deriv}) implies
that the derivative of a function from the domain of $\cH$ is
continuous across $v$ and the functions from $H^2$-spaces on the
sub-edges match up to form a valid function from $H^2$-space on the
whole edge. On the other hand, imposing the Dirichlet condition is equivalent to
cutting the graph at the given point and imposing Dirichlet conditions
at the two new vertices of degree 1.  We will therefore consider the
introduction of such a Dirichlet vertex as a change to the
topology of the graph (which might result even in a change of the
number of its connected components). Introducing new vertices on a graph
is a key element in the present paper.


\subsection{Nodal count}
\label{sec:intro_nodal_count}


The main purpose of this article is to investigate the structural
properties of nodal domains of the eigenfunctions of a quantum graph.
In this section we define the nodal domains and review some known
results.

Nodal domains are the connected components of a graph from which the
zero points of a given function have been removed.  More precisely, a
\emph{positive (negative) domain\/} with respect to a function $f$ is
a maximal connected subset in $\G$ where $f$ is positive
(correspondingly, negative).  The total number of positive and
negative domains will be called the \textbf{nodal count} of $f$ and
denoted by $\nu(f)$.  We use $\nu_{n}$ as a shorthand for
$\nu(f_{n})$, where $f_{n}$ is the $n$-th eigenfunction of the graph
in question.  The number of internal zeros of the function $f$ will be
denoted by $\mu(f)$ and $\mu_{n}$ is a shorthand for $\mu(f_{n})$.
Throughout the manuscript we will assume that the zeros of the
function in question do not lie on the vertices of the graph.

The two quantities $\mu$ and $\nu$ are closely related, although, due
to the graph topology, the relationship is more complex than on a
line, where $\nu=\mu+1$.  The topology of the graph comes into play
via the \emph{first Betti number} of $\Graph$ (hereafter, simply
``Betti number''),
\begin{equation}
  \beta_{\Gamma}=|\Edges|-|\Vertices|+1.
  \label{eq:cyclom}
\end{equation}
The graph Betti number has several related interpretations.  In particular, it counts
the number of independent cycles in the graph and gives the
minimal number of edges that need to be removed from $\Gamma$ to turn
it into a tree. Correspondingly, $\beta_{\Gamma}=0$ if and only if
$\Gamma$ is a tree graph, namely if and only if any two vertices of
$\Gamma$ are connected by exactly one path.

The graphs considered in this paper are connected.  However, since the
definition of the nodal domains calls for cutting the graph into
several components, it is beneficial to extend equation
(\ref{eq:cyclom}) to disconnected graphs.  In that case,
$\beta_\Gamma$ is the sum of \cyclomatic numbers of the connected
components, leading to
\begin{equation}
  \beta_{\Gamma}=|\Edges|-|\Vertices|+k,
  \label{eq:cyclom_not_conn}
\end{equation}
where $k$ is the number of connected components of $\Gamma$.

Consider a function $f$ which is non-zero on the vertices of $\Gamma$
and has finitely many isolated zeros.  Denote the set of zeros by $P=P(f)$
and denote by $\Gamma\setminus P$ the graph obtained by cutting
$\Gamma$ at points $P$.  Then, by definition of nodal count,
\begin{equation*}
  \beta_{\Gamma\setminus P} = |\Edges_{\Gamma\setminus P}| -
  |\Vertices_{\Gamma\setminus P}| + \nu(f).
\end{equation*}
Since every cut adds 2 new vertices but increases the number of edges
by 1 only, we get
\begin{equation}
  \label{eq:cyclom_partition}
  \beta_{\Gamma\setminus P} = |\Edges_{\Gamma}| -
  |\Vertices_{\Gamma}| -\mu(f) + \nu(f).
\end{equation}
Combining equations (\ref{eq:cyclom_partition}) and (\ref{eq:cyclom})
we obtain
\begin{equation}
  \label{eq:mu_and_nu_exact}
  \nu(f) = \mu(f) + 1 - \left(\beta_{\Gamma}-\beta_{\Gamma\setminus P}\right).
\end{equation}
In particular, one has the bounds
\begin{equation}
  \mu-\beta_{\G}+1\leq \nu \leq \mu+1.
  \label{E:mu_nu}
\end{equation}

We now concentrate on the nodal count of the \emph{eigenfunctions} of
the graph.  According to the well known ODE theorem by Sturm
\cite{Stu_jmpa36,Stu_jmpa36a,Hin_incol05}, the zeros of the $n$-th
eigenfunction of the operator of type \eqref{E:schrod} on an interval
divide the interval into $n$ nodal domains. By contrast, in the
corresponding question in $\Reals^{d}$, $d\geq 2$, only an upper bound
is possible, given by the Courant's nodal line theorem \cite{CourantHilbert_volume1},
$\nu_{n}\leq n$. In a series of papers
\cite{AlO_viniti92,PokPryObe_mz96,GnuSmiWeb_wrm04,Schapotschnikow06,Ber_cmp08},
it was established that a generic eigenfunction of a quantum graph
satisfies both an upper and a lower bound. Namely, let $\lambda_{n}$
be a simple eigenvalue of the Schr\"odinger operator (\ref{E:schrod}),
on a graph $\Gamma$ and its eigenfunction $f^{(n)}$ be non-zero at all
vertices of $\Gamma$. Then
\begin{equation}
  n-\beta_{\G}\leq\nu_{n}\leq n.
  \label{eq:nodal_domains_bound}
\end{equation}
In fact, a simple modification\footnote{Change the first inequality on
  page 811 of the journal version of \cite{Ber_cmp08} to $\nu_T(\psi)
  = \nu_G(\psi) + \ell - \beta_{G\setminus P}$; note that $\ell$
  was denoting $\beta_G$.} in the proof of the lower bound
\cite{Ber_cmp08} improves the bound to
\begin{equation}
  n - \left(\beta_{\G}-\beta_{\Gamma\setminus P}\right) \leq \nu_{n} \leq n.
  \label{eq:nodal_domains_bound_improv}
\end{equation}
Using formula (\ref{eq:mu_and_nu_exact}) we have a similar formula for
the number of zeros,
\begin{equation}
  n-1 \leq \mu_{n}
  \leq n-1+\left(\beta_{\Gamma}-\beta_{\Gamma\setminus P}\right),
  \label{eq:nodal_zeros_bound_improv}
\end{equation}
or a simpler but weaker version
\begin{equation}
  n-1 \leq \mu_{n} \leq n-1+\beta_{\G}.
  \label{eq:nodal_zeros_bound}
\end{equation}

The conditions for the validity of the above inequalities will be
imposed in the present article as well, thus we give them a name.
\begin{definition}
  An eigenfunction $f_{n}$ of a graph $\Gamma$ is called
  \emph{proper} if it is non-zero on vertices of $\Gamma$ and the
  corresponding  eigenvalue $\lambda_n$ is simple.
\end{definition}
Finally we would like to mention that, unlike the $\Reals^{d}$ case,
even the upper bound $\nu_{n}\leq n$ is in general not valid for
improper eigenfunctions on quantum graphs.


\section{The main results}
\label{sec:results}

In the previous section we reviewed the known results on the
number of zeros of the $n$-th eigenfunction of a quantum graph.  The
aim of this paper is to investigate the qualitative features of the
$n$-th zero set.  The question that should be kept in mind is: given a
set of points on the graph, is there an eigenfunction that is zero at
precisely these points?

\begin{definition}
  \label{def:partition}
  Let $\Gamma$ be a quantum graph.
  \begin{enumerate}
  \item A \term{partition vertex} on $\Gamma$ is a new vertex being
    introduced on an edge of $\Gamma$. The partition vertex is called
    \emph{proper} if it is located in the interior of an edge, that is not at
    an existing vertex of $\Gamma$.  Otherwise, we call it an improper
    partition vertex.
  \item An $m$-\term{partition} of $\Gamma$ is a set of $m$ partition
    vertices on the graph. The partition is proper if all of its
    vertices are proper. Otherwise, we call it an improper
    partition. The set of all proper $m$-partitions of $\Gamma$ is
    denoted by $\cP$.
  \end{enumerate}
\end{definition}

\begin{remark}
  An eigenfunction on a generic graph is expected to be non-zero on
  the vertices of the graph (see \cite{Fri_ijm05} for a related result
  in a special case).  Thus improper partitions are not relevant for
  the study of eigenfunctions on a generic graph.  In section
  \ref{sec:other_scenarios} we will discuss some pathological aspects
  of improper partitions, and point out why our restrictions cannot be
  relaxed.  In the rest of the manuscript a partition would always
  mean a proper one.
\end{remark}

\begin{remark}
  The partition $P\in\cP$ should be understood as a candidate for
  the zero set of an eigenfunction.  As mentioned in section
  \ref{sec:intro_quantum_graphs}, imposing Dirichlet vertex conditions
  at the partition vertices of $P$ separates $\Gamma$ into several
  subgraphs, which we denote by $\left\{ \Gamma_{j}\right\}$, and call
 the partition's subgraphs or connected components.  The number of
  partition components is denoted by $\nu(P)$ and is related to the
  number of partition points $\mu(P)\equiv m$ via
  equation~(\ref{eq:mu_and_nu_exact}).  We chose the number of points
  $m$ to act as the \emph{size of the partition} to simplify the
  subsequent notation.  Making the other possible choice, $m:=\nu(P)$,
  would result in only minor changes to the proof and will have almost
  no effect on the final result.  We note that in dimensions higher
  than 1, the ``number of zeros'' concept is no longer available, and the
  ``number of components'' therefore acts as the size of the
  partition.
\end{remark}

In the definition of nodal domains in
section~\ref{sec:intro_nodal_count} we distinguished positive and
negative domains.  If an eigenfunction is proper it must change sign
at every zero, thus two neighboring domains must have different
sign.  This motivates the following definition.

\begin{definition}
  The partition $P\in\cP$ is called \term{bipartite} if there exists a map
  from its subgraphs to a sign, $\left\{ \Gamma_{j}\right\}
 \rightarrow\left\{ +,-\right\} $, such that neighboring subgraphs
  are mapped to different signs.
\end{definition}

\begin{figure}[ht]
  \begin{centering}
    \hfill
    \begin{minipage}[c]{0.3\columnwidth}%
      \begin{center}
        \includegraphics{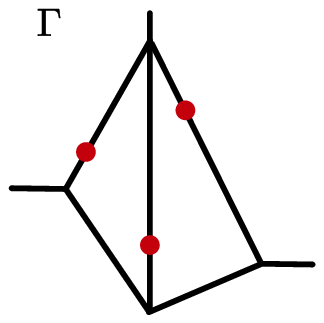}
      \end{center}

      \medskip{}
      \centerline{(a)}
    \end{minipage}
    \hfill{}%
    \begin{minipage}[c]{0.3\columnwidth}%
      \begin{center}
        \includegraphics{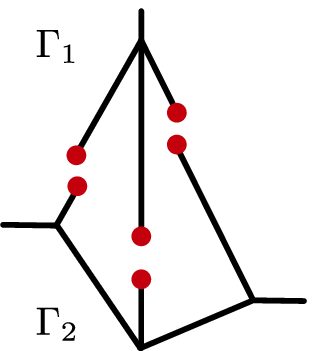}
      \end{center}

      \bigskip\bigskip

      \centerline{(b)}
   \end{minipage}
    \hfill{}
    \begin{minipage}[c]{0.3\columnwidth}%
      \begin{center}
        \includegraphics{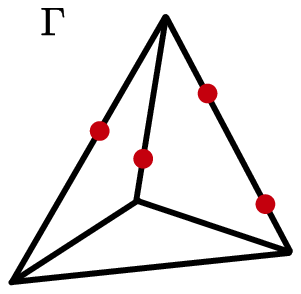}
      \end{center}

      \bigskip\bigskip

      \centerline{(c)}
   \end{minipage}
    \hfill{}
  \end{centering}

  \caption{(a) A proper partition
    $P\in\mathcal{P}_{3}\left(\Gamma\right)$ and (b) its two
    subgraphs; (c) a non-bipartite partition with $\beta_{\Gamma\setminus
      P}=1$}

 \label{fig:partition_and_subgraphs}
\end{figure}

We say that a partition $P$ of $\Gamma$ \term{corresponds} to a
function $f$ on $\Gamma$, or that $f$ \term{corresponds} to $P$ if $f$
vanishes exactly at the partition vertices of $P$.

We aim to characterize the partitions that correspond to the
eigenfunctions of $H$ on $\Gamma$.  As we mentioned already, the
partition must be bipartite.  Further, observe that for a partition
$P$ which corresponds to the $k$-th eigenfunction
$f_{k}$ of $\Gamma$ we have for all $j$
\begin{equation*}
  \lambda_{1}\left(\Gamma_{j}\right) = \lambda_{k}\left(\Gamma\right),
\end{equation*}
where $\lambda_{1}\left(\Gamma_{j}\right)$ is the groundstate
eigenvalue of the Hamiltonian $\left.H\right|{}_{\Gamma_{j}}$
restricted to the $j$-th subgraph $\Gamma_j$ (with Dirichlet
conditions imposed at partition points).  This is because the
restriction of $f_{k}$ to $\Gamma_j$ is an eigenfunction of
$\left.H\right|{}_{\Gamma_{j}}$: it satisfies the eigenvalue
equation~(\ref{eq:eig_eq}) and vanishes at the partition points.  It
must be the groundstate since it does not change sign on $\Gamma_j$.

Thus, for a partition to correspond to an eigenfunction, all
groundstate energies of the partition's subgraphs, $\left\{
  \Gamma_{j}\right\} $, must be equal.  This property is referred to
in the following definition.

\begin{definition}
  An $m$-partition is an \term{equipartition} if all of its subgraphs
  $\left\{ \Gamma_{j}\right\} $ share the same first
  eigenvalue:
  \begin{equation}
    \lambda_{1}(\Gamma_{j_{1}})=\lambda_{1}(\Gamma_{j_{2}})
    \qquad \mbox{for all }j_1,\ j_2.
    \label{eq:equipart}
  \end{equation}
  The set of all proper equipartitions of $\Gamma$ of size $m$ is
  denoted $\cQ$.
\end{definition}

We proceed by defining the following energy functional on $\cP$ (compare
with similar definition in \cite{ConTerVer_cvpde05,HelHofTer_aihp09}):

\begin{definition}
  \label{def:energy_function}
  The functional $\Lambda\,:\,\cP \rightarrow\rr$ is defined by
  \begin{equation*}
   \Lambda\left(P\right) :=\max_{j}\lambda_{1}(\Gamma_{j}).
  \end{equation*}
\end{definition}

The partitions minimizing $\Lambda$ over $\cP$ (defined on
$d$-dimensional domains) were considered in
\cite{ConTerVer_cvpde05,HelHofTer_aihp09}.  However, it is easy to show (as proved
for graphs in the next theorem) that any local minimum of $\Lambda$ on
$\cP$ must be an equipartition.  But first we need the notion of
proximity for partition.  We define the $\varepsilon$-neighborhood of
a partition $P\in\cP$ to be the set of all the partitions obtained by
perturbing the positions of $P$'s partition vertices by a distance
smaller than $\varepsilon$.

\begin{theorem}
  \label{thm:minima_of_Lambda}
  Let $P$ be a local minimum of $\Lambda$ on $\cP$. Then
  $P\in\cQ$.
\end{theorem}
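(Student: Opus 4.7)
The plan is to argue the contrapositive: assuming $P$ is not an equipartition, I construct a small perturbation of $P$ inside $\cP$ on which $\Lambda$ takes a strictly smaller value. Label the subgraphs $\{\Gamma_j\}_{j=1}^{k}$, set $M:=\Lambda(P)$, and split the indices into the ``max'' set $U=\{j:\lambda_1(\Gamma_j)=M\}$ and the ``non-max'' set $V=\{j:\lambda_1(\Gamma_j)<M\}$, so $V\neq\emptyset$ under our assumption.

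First I would parametrize a neighborhood of $P$ in $\cP$ by signed displacements $\delta=(\delta_1,\ldots,\delta_m)\in\rr^m$ of the partition vertices along their edges, and then differentiate each $\lambda_1(\Gamma_j)$. Because each perturbation is local to a single edge, a standard Hadamard-type boundary calculation (reducing here to one-dimensional calculus on the edge containing $v_i$) should yield
\begin{equation*}
\nabla\lambda_1(\Gamma_j)\big|_P \;=\; -\sum_{v_i\in\partial\Gamma_j}\sigma_{ji}\,\kappa_{ji}\,e_i,
\end{equation*}
where $\kappa_{ji}:=|\phi_j'(v_i)|^2>0$ (the Dirichlet ground state $\phi_j$ of $\Gamma_j$ is simple and positive in the interior, hence has nonzero derivative at its Dirichlet endpoints) and $\sigma_{ji}\in\{\pm 1\}$ records whether a positive $\delta_i$ grows or shrinks $\Gamma_j$. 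If $v_i$ separates $\Gamma_{a(i)}$ from $\Gamma_{b(i)}$, then $\sigma_{a(i),i}=-\sigma_{b(i),i}$.

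The heart of the argument is to produce a direction $\delta\in\rr^m$ with $\nabla\lambda_1(\Gamma_j)\cdot\delta<0$ simultaneously for every $j\in U$. Given such a $\delta$, the strict gap $\lambda_1(\Gamma_j)<M$ for $j\in V$ together with continuity of $\lambda_1$ in $\delta$ ensures $\Lambda(P+t\delta)<M=\Lambda(P)$ for all sufficiently small $t>0$, contradicting local minimality. By Gordan's theorem of the alternative, such a descent direction fails to exist only if $0\in\mathrm{conv}\{\nabla\lambda_1(\Gamma_j):j\in U\}$, so it suffices to rule out a vanishing convex combination $\sum_{j\in U}c_j\nabla\lambda_1(\Gamma_j)=0$ with $c_j\geq 0$ and $\sum c_j=1$. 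To do this I would set $c_j:=0$ for $j\in V$ and read off the $v_i$-component of the putative zero combination, which using the sign relation $\sigma_{a(i),i}=-\sigma_{b(i),i}$ becomes
\begin{equation*}
c_{a(i)}\kappa_{a(i),i} \;=\; c_{b(i)}\kappa_{b(i),i}\qquad\text{for every partition vertex }v_i.
\end{equation*}
Since all $\kappa>0$, the zero-set of $c$ is closed under adjacency in the meta-graph whose vertices are the $\{\Gamma_j\}$ and whose edges are the $\{v_i\}$. This meta-graph inherits connectedness from $\Gamma$ and its zero-set already contains the nonempty $V$, so connectedness forces $c\equiv 0$, contradicting $\sum c_j=1$.

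The main obstacle to anticipate is the sign bookkeeping and the verification of the Hadamard-type derivative formula (in particular the non-vanishing of $\phi_j'(v_i)$), but these are essentially routine because perturbations decouple across edges. The substantive idea is the cascade argument which propagates the zero coefficients from the non-max subgraphs through the entire max cluster, and this crucially depends on the connectedness of $\Gamma$ and hence of the meta-graph.
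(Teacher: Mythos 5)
Your proof is correct, but it runs along a genuinely different track from the paper's. The paper argues greedily and sequentially: by connectedness it finds a partition vertex separating a component at the maximal energy from one strictly below, nudges that single vertex into the lower-energy component (using only the one-sided Hadamard monotonicity: lengthening an edge at a Dirichlet endpoint lowers $\lambda_1$), and repeats until the set $\mathcal{I}$ of maximal components is emptied; no gradient formula, no convex analysis. You instead build a single simultaneous descent direction $\delta\in\rr^m$ for all maximal components at once, invoking the quantitative Hadamard derivative $-|\phi_j'(v_i)|^2$ and Gordan's theorem of the alternative, and then rule out $0\in\mathrm{conv}\{\nabla\lambda_1(\Gamma_j)\}_{j\in U}$ by propagating zero coefficients from the nonempty non-maximal set $V$ through the connected meta-graph. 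Your route costs more machinery (simplicity and strict interior positivity of each ground state, non-vanishing of $\phi_j'$ at the Dirichlet endpoints, a theorem of the alternative) but buys a first-order, single-step descent statement that sits closer in spirit to the paper's later treatment of $\Lambda$ as a differentiable functional with critical points and a Morse index; the paper's version is softer and needs only domain monotonicity. One bookkeeping point you should address: a partition vertex $v_i$ need not disconnect its component (e.g.\ a single cut point on a cycle), in which case $a(i)=b(i)$ and your sign relation $\sigma_{a(i),i}=-\sigma_{b(i),i}$ does not apply; the corresponding coordinate of the convex combination then reads $c_{a(i)}\bigl(\kappa^+_{i}-\kappa^-_{i}\bigr)=0$ and forces nothing. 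This does not break the argument --- such ``self-loops'' of the meta-graph neither help nor hinder the propagation, and the meta-graph restricted to edges joining distinct components is still connected --- but it should be said explicitly.
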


\begin{proof}[Proof of theorem \ref{thm:minima_of_Lambda}]
  We will prove the theorem by contradiction, by showing that any
  partition $P\not\in\cQ$ can be perturbed to decrease the energy
  $\Lambda$.  The perturbation will be performed upon one partition
  point at a time, and will use the fact that elongating the edge
  connected to a degree one vertex with Dirichlet condition decreases
  the groundstate energy.  This follows from the
  well known Hadamard formula for the derivative of an eigenvalue with
  respect to the variation of the domain (see \cite{BerKuc_prep10} for
  the quantum graph adaptation of the formula).

  Let $P\in\cP$ be a local minimum of $\Lambda$. Assume that $P$ is
  not an equipartition. We show that we can perturb the positions of
  the partition vertices of $P$ in a way which decreases the value of
  $\Lambda$ and arrive to a contradiction. Let $\left\{
    \Gamma_{j}\right\} $ be the set of all connected components of
  $P$.  Further, let $\mathcal{I}$ be the set of all $i$ such that
  $\lambda_{1}\left(\Gamma_{i}\right)=\Lambda\left(P\right)$.  In
  particular, since $P$ is not an equipartition, there exist two
 neighboring components, $\Gamma_{i}$ and $\Gamma_{j}$, such that
  $\lambda_{1}\left(\Gamma_{j}\right) <
  \lambda_{1}\left(\Gamma_{i}\right) = \Lambda\left(P\right)$, that is
  $i\in\mathcal{I}$ and $j\notin\mathcal{I}$.  Let $v$ be a partition
  vertex which belongs to the common boundary of $\Gamma_{i}$ and
  $\Gamma_{j}$.  We modify $P$ by slightly moving $v$ into
  $\Gamma_{j}$.  Such a perturbation increases
  $\lambda_{1}\left(\Gamma_{j}\right)$ and decreases
  $\lambda_{1}\left(\Gamma_{i}\right)$, as discussed above.  We use a
  perturbation small enough such that the relation
  $\lambda_{1}\left(\Gamma_{j}\right) <
  \lambda_{1}\left(\Gamma_{i}\right)$ still holds.  After performing
  this perturbation the size of the set $\mathcal{I}$ is reduced by
  one.  We continue perturbing the partition vertices' positions in
  the same manner until we exhaust this set.  Finally, for the
  modified partition $P'$ we have $\Lambda\left(P'\right) <
  \Lambda\left(P\right)$, which contradicts $P$ being a minimum of
  $\Lambda$.
\end{proof}

Recognizing the significance of equipartitions, we wish to further
investigate the energy functional $\Lambda$ restricted to $\cQ$.  In
section \ref{sec:param} we prove the following theorem which 
describes a parameterization of the set of equipartitions $\cQ$.

\begin{theorem}
  \label{thm:parametr}
  Let $\Gamma$ be a finite connected graph with the \cyclomatic number
  $\beta$. Then there exists a number $N$ such that for all $m>N$
  \begin{enumerate}
  \item \label{i:domain_open} there is a map $\Phi_m$ defined on an
    open subset of the torus $\mathbb{T}^\beta = \pp^{\beta}$,
  \item \label{i:bijection} the map acts bijectively between its domain and
    the set of proper equipartitions $\cQ$.
  \item \label{i:smooth} the functional $\Lambda\circ\Phi_m$ is smooth.
  \end{enumerate}
\end{theorem}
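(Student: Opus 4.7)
The plan is to realize $\Phi_m$ via a \emph{cut-and-solve} construction. Fix a spanning tree $T\subset\Gamma$ and label the complementary edges $e_{1},\ldots,e_{\beta}$; a chosen parameterization identifies each $e_{i}$ with $\pp$. For $\tuple\in\mathbb{T}^{\beta}=\pp^{\beta}$ let $\tga$ be the graph obtained from $\Gamma$ by introducing a partition vertex with Dirichlet conditions at the point $\varphi_{i}\in e_{i}$, for each $i$. A short Euler count gives $\beta_{\tga}=|\Edges|+\beta-|\Vertices|-2\beta+1=0$, so $\tga$ is a tree with $2\beta$ additional degree-one Dirichlet leaves.

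Next I define $\Phi_{m}(\tuple)$ via the $k$-th eigenfunction $f_{k}$ of $H$ on $\tga$ for $k:=m-\beta+1$. On a tree, the bound~\eqref{eq:nodal_domains_bound} is automatically sharp: a proper $k$-th eigenfunction has exactly $k-1=m-\beta$ interior zeros, and on each of its $k$ nodal domains it is a signed Dirichlet ground state with eigenvalue $\lambda_{k}$. The $m$-partition $\Phi_{m}(\tuple)$ is formed by these interior zeros together with the $\beta$ cuts at $\tuple$, and it lies in $\cQ$ automatically with common eigenvalue $\Lambda\bigl(\Phi_{m}(\tuple)\bigr)=\lambda_{k}(\tga)$. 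The domain of $\Phi_{m}$ is the open set of $\tuple$ for which $\lambda_{k}$ is simple, $f_{k}$ does not vanish at an original vertex of $\Gamma$, and its interior zeros are simple and in the interiors of edges.

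For bijectivity I would reverse the construction. Given $P\in\cQ$ of size $m$ satisfying $\beta_{\Gamma\setminus P}=0$, assign alternating signs to its $\nu=m-\beta+1$ components (possible because the Betti-zero cut graph is a forest, hence bipartite) and glue the corresponding Dirichlet ground states. Relative to the tree $\tga$ obtained by retaining just one distinguished partition vertex of $P$ on each $e_{i}$ as a Dirichlet leaf (the remaining partition vertices of $P$ then become interior zeros of the glued function), the glued function is precisely $f_{k}$. Reading off the coordinate of the distinguished leaf recovers $\tuple$. Injectivity is then immediate since $\tuple$ is literally $\beta$ of the coordinates of $\Phi_{m}(\tuple)$. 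Surjectivity hinges on the input that for $m>N$ one cannot have $\beta_{\Gamma\setminus P}>0$: an uncut cycle would contribute a subgraph whose fixed Neumann-type ground state eigenvalue cannot match the divergent ground-state energies of the very short components forced by large $m$, an incompatibility that pins down the threshold $N$.

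Smoothness of $\Lambda\circ\Phi_{m}=\lambda_{k}(\tga)$ is a consequence of Kato--Rellich: edge lengths of $\tga$ depend real-analytically on $\tuple$, and simple Dirichlet eigenvalues depend real-analytically on edge lengths. The main obstacle I anticipate is sorting out the \emph{combinatorial stratification}: as $\tuple$ moves in $\mathbb{T}^\beta$, an interior zero of $f_{k}$ can migrate across an original vertex of $\Gamma$, switching the combinatorial type of $\Phi_{m}(\tuple)$. These transitions occur precisely on the codimension-one subvariety of $\mathbb{T}^{\beta}$ where $f_k$ vanishes at a vertex, i.e.\ where $\Phi_{m}$ would produce an improper partition. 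Removing this subvariety leaves an open domain, and a careful argument is required to show that its connected components map bijectively onto the various combinatorial strata of $\cQ$, with every stratum receiving at least one distinguished cut per non-tree edge.
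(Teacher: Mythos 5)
Your construction differs from the paper's in a way that breaks part \ref{i:bijection}. You parameterize by the \emph{positions} of $\beta$ Dirichlet cuts on the non-tree edges and declare the partition to be those cuts together with the interior zeros of the $k$-th eigenfunction of the cut tree. The claim that ``injectivity is immediate since $\tuple$ is literally $\beta$ of the coordinates of $\Phi_m(\tuple)$'' fails: the output is an unordered set of $m$ points, and for $m>N$ every edge (in particular every non-tree edge $e_i$) generically carries \emph{several} partition points, so the distinguished Dirichlet cut on $e_i$ cannot be recovered from the partition. Worse, the map is genuinely many-to-one. Concretely, take the theta graph ($\beta=2$, non-tree edges $a,b$) and a $3$-equipartition with points $\varphi_1,z$ on $a$ and $\varphi_2$ on $b$. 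Starting from the cut at $(z,\varphi_2)$, glue the Dirichlet ground states of the two subgraphs with opposite signs, rescaling one to match derivatives at $\varphi_1$ (possible since the cut graph is a forest and ground states have nonvanishing normal derivatives); the result is the second eigenfunction of that tree, with its single interior zero at $\varphi_1$. Hence $\Phi_3(\varphi_1,\varphi_2)=\Phi_3(z,\varphi_2)$, and in general the fiber over a partition has cardinality equal to the product over $i$ of the number of partition points on $e_i$. Your final paragraph gestures at a ``combinatorial stratification'' issue but does not identify or repair this, and no restriction of the domain obviously does. A secondary soft spot is the threshold $N$: your surjectivity hinges on ruling out $\beta_{\Gamma\setminus P}>0$, but the ``divergent ground-state energies of very short components'' sketch is not an argument; the paper instead shows via the interlacing bound $\lambda_{m-\beta+1}(\Gamma)\le\lambda_{m+1}$ that for large $m$ the common eigenvalue exceeds $\max_e\lambda_e$ (first Dirichlet eigenvalue of each edge), which forces a zero on every edge.

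The paper avoids the injectivity problem by a different choice of coordinates: the cut locations (``section points'' $v_i$) are \emph{fixed} once and for all, and the torus coordinates are the Robin/$\delta$-type parameters $\varphi_i$ in the boundary conditions \eqref{eq:delta_conditions_on_tree} imposed at $v_i^{\pm}$. The section points are then \emph{not} partition points; the partition consists only of the $m$ zeros of the $(m+1)$-th eigenfunction of $\tga$, and each $\varphi_i$ is recovered uniquely from the equipartition via $\tan(\varphi_i/2)=f'(v_i)/f(v_i)$, where $f$ is the ground state of the unique subgraph containing $v_i$. This also makes the torus identification $\varphi_i=\pi\sim-\pi$ natural (both give the Dirichlet condition at the fixed $v_i$), whereas in your setup the two ends of $\pp$ correspond to the cut degenerating onto two different vertices of $\Gamma$, i.e.\ to improper partitions, so you really only have an open cube. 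Your equipartition verification and the analyticity of $\lambda_k(\tga)$ in the cut positions are fine as far as they go, but without a repair of injectivity the proposed $\Phi_m$ does not prove the theorem.
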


\begin{remark}
  In the proof of theorem~\ref{thm:parametr} the map $\Phi_m$ will be
  constructed explicitly.  Furthermore, in
  theorem~\ref{thm:loc_parametr} we will lift the restriction $m>N$ by
  sacrificing the global structure of the map $\Phi_m$.
\end{remark}

We may use theorem \ref{thm:parametr} to allow ourselves from now on
to identify $m$-equipartitions with elements
$\vec{\varphi}=\tuple\in\pp^{\beta}$ without mentioning the map.
In particular, we can consider the energy functional $\Lambda$ to
be defined on the domain of $\Phi_m$.
This allows us to state the main result
of this manuscript:

\begin{theorem}
  \label{thm:critical_point}
  Let $\Gamma$ be a finite connected graph. Let
  $m$ be large enough such that the properties in theorem
  \ref{thm:parametr} hold.
  \begin{enumerate}
  \item \label{enu:critical_point_part1} If a bipartite proper
    equipartition is a critical point of $\Lambda$, then it corresponds
    to an eigenfunction of $\Gamma$. Conversely, the partition which
   corresponds to a proper eigenfunction of $\Gamma$ is a critical
    point of $\Lambda$.
  \item \label{enu:critical_point_part2} If the critical point
    corresponding to the $n$-th eigenfunction is non-degenerate, the
    nodal deficiency $d_{n}=n-\nu_n$ of the eigenfunction is equal to
    the Morse index (the number of unstable directions) of the
    critical point.
  \end{enumerate}
\end{theorem}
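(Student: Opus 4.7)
For part (1), the converse direction (eigenfunction $\Rightarrow$ critical partition) is the shorter half. Given a proper eigenfunction $f_n$ with eigenvalue $\lambda_n$, its zero set defines a bipartite equipartition $P$: on each subgraph $\Gamma_j$ the restriction of $f_n$ is sign-definite and must therefore be the groundstate of $H|_{\Gamma_j}$ with eigenvalue $\lambda_n$. To show $P$ is critical I would compute $d\lambda_1(\Gamma_j)$ at $P$ by Hadamard's formula, expressed as $-\sum_{v\in\partial\Gamma_j}|\phi_j'(v)|^2\,\epsilon_{v,j}\,dx_v$. Choosing normalizations so that $s_j c_j \phi_j = f_n|_{\Gamma_j}$, the smoothness of $f_n$ across each partition vertex $v$ yields $c_i|\phi_i'(v)| = c_j|\phi_j'(v)|$. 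The weighted combination $\sum_j c_j^2\, d\lambda_1(\Gamma_j)$ then vanishes identically on the full perturbation space, since at each partition vertex the two adjacent contributions carry opposite $\epsilon_{v,\cdot}$ signs and cancel. On the equipartition manifold all $d\lambda_1(\Gamma_j)$ coincide with $d\Lambda$, so $(\sum_j c_j^2)\,d\Lambda = 0$, and hence $d\Lambda = 0$.

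For the forward direction, let $P\in\cQ$ be bipartite with $d\Lambda|_{T\cQ}=0$. Applying Lagrange multipliers to the $\nu-1$ equipartition constraints $\lambda_1(\Gamma_k)-\lambda_1(\Gamma_\nu)=0$, a bookkeeping check shows that there exist coefficients $\gamma_1,\ldots,\gamma_\nu$ with $\sum_j\gamma_j=1$ such that $\sum_j\gamma_j\,d\lambda_1(\Gamma_j)=0$ as a 1-form on the full perturbation space. Equating Hadamard coefficients of each $dx_v$ gives $\gamma_i|\phi_i'(v)|^2 = \gamma_j|\phi_j'(v)|^2$ at every partition vertex between $\Gamma_i$ and $\Gamma_j$. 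Strict positivity of the boundary derivative of any Dirichlet groundstate, combined with connectedness of the subgraph adjacency graph, forces all $\gamma_j$ to share a common sign; since they sum to $1>0$, each $\gamma_j>0$. Setting $c_j=\sqrt{\gamma_j}$ and using the bipartite signs $s_j$, the function $f:=\sum_j s_j c_j\phi_j$, extended by zero outside each $\Gamma_j$, is continuous at each partition vertex and the matching $c_i|\phi_i'(v)|=c_j|\phi_j'(v)|$, combined with the bipartite sign flip, makes $f$ a $C^1$ function and hence a genuine eigenfunction of $H$ on $\Gamma$ with eigenvalue $\Lambda(P)$.

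For part (2), I would work in the chart $\Phi_m$ so that $\Lambda\circ\Phi_m$ is smooth on an open subset of $\mathbb{T}^\beta$, and compute its Hessian at the critical point via second-order domain-perturbation theory. Each $\lambda_1(\Gamma_j)$ admits a Rayleigh--Schr\"odinger expansion consisting of a geometric boundary term (non-negative) together with a sum $\sum_{k\ge 2}(\cdot)/(\lambda_1(\Gamma_j)-\lambda_k(\Gamma_j))$ whose denominators are strictly negative. Imposing the equipartition constraint at second order, using the positive weights $\gamma_j$ obtained in part (1), collapses the $\nu$ separate second variations into a single quadratic form on the $\beta$-dimensional parameter space. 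I would then match the Morse count to $d_n$ through the Rayleigh--Ritz principle~\eqref{eq:minimax}: each unstable Hessian direction produces a one-parameter family $P_\varepsilon$ along which $\Lambda(P_\varepsilon)<\lambda_n$, whose subgraph groundstates (extended by zero) enlarge the natural $\nu_n$-dimensional test subspace spanned by the $\phi_j$'s, one independent dimension per unstable direction, thereby witnessing exactly $d_n$ eigenvalues $\lambda_{\nu_n+1},\ldots,\lambda_n$ of $H$ below or equal to $\lambda_n$.

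The main obstacle lies in part (2). The tangent space $T\cQ$ is naturally indexed by the $\beta$ independent cycles of $\Gamma$, not by partition vertices directly, so reducing the per-vertex second-order Hadamard data to a Hessian on $\mathbb{T}^\beta$ requires careful bookkeeping and a uniformly consistent treatment of signs through the Rayleigh--Schr\"odinger sums. The most delicate step will be establishing an exact bijection between negative Hessian eigendirections and the $d_n$ ``missing'' low-lying eigenvalues of $H$: proving that each unstable direction contributes an independent test function in the min-max, and that stable directions do not, will likely require a transversality argument anchored at $f_n$ together with an interlacing-type estimate for the one-parameter families $\Lambda(P_\varepsilon)$.
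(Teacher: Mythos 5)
Your part~(\ref{enu:critical_point_part1}) is essentially correct but takes a genuinely different route from the paper. You work directly in the coordinates of $\cP$ (positions of the partition vertices), treat $\cQ$ as the constraint surface $\lambda_1(\Gamma_i)=\lambda_1(\Gamma_j)$, and use Lagrange multipliers plus the Hadamard formula to extract the matching conditions $c_i|\phi_i'(v)|=c_j|\phi_j'(v)|$, from which the bipartite signs let you glue the groundstates into a $C^1$ eigenfunction. The paper instead never differentiates with respect to the partition vertices at all: it cuts $\Gamma$ at $\beta$ fixed \emph{section points} into a tree $\tga$, parameterizes $\cQ$ by the $\delta$-condition angles $\varphi_j$ at the cut, and differentiates the quadratic form $h[f,f]$ of the $(m+1)$-th tree eigenfunction with respect to $\varphi_j$, obtaining $\partial\Lambda/\partial\varphi_j\propto |f(v_j^+)|^2-|f(v_j^-)|^2$; criticality plus bipartiteness then gives $C^1$ matching at the section points. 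Your version is closer in spirit to what one must do for $\R^d$ domains (where no tree parameterization exists) and makes the role of the weights $\gamma_j$ explicit; the paper's version buys the explicit gradient formula \eqref{eq:quadratic_form_derivative}, which is exactly what feeds into part~(\ref{enu:critical_point_part2}). You should still verify the regularity of the $\nu-1$ constraints (so that the multipliers exist), but this follows from the nonvanishing of the boundary derivatives.

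Part~(\ref{enu:critical_point_part2}) is where there is a genuine gap, and you have in fact pointed at it yourself. Your plan is to count eigenvalues of $H$ below $\lambda_n$ by enlarging the test space of the $\nu_n$ groundstates $\phi_j$ with groundstates taken from perturbed partitions $P_\varepsilon$ along unstable Hessian directions. This step does not go through as described: the perturbed groundstates have supports overlapping those of the $\phi_j$ (and of each other, for different unstable directions), so the span of the combined family is not an orthogonal sum, and the maximum of the Rayleigh quotient over that span is \emph{not} controlled by the maxima over the individual generators. You therefore cannot conclude $\lambda_{\nu_n+k}\le\lambda_n$ from $k$ unstable directions this way, and you offer no mechanism at all for the reverse inequality (that stable directions do not also produce low-lying eigenvalues). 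The paper's missing idea is the interlacing theorem for gluing the pair $v_j^-,v_j^+$ back into a single vertex: it yields $\lambda_m(\Gamma_{\tilde\varphi_2,\ldots})\le\lambda_{m+1}(\Gamma_{\varphi_1,\tilde\varphi_2,\ldots})\le\lambda_{m+1}(\Gamma_{\tilde\varphi_2,\ldots})$ for \emph{all} $\varphi_1$, with equality at $\varphi_1=\tilde\varphi_1$ on exactly one side. Iterating over the $\beta$ section points gives the mixed minimax
\begin{equation*}
  \lambda_{m+1-\Sigma}(\Gamma)
  = \opt_{\sigma_\beta}\cdots\opt_{\sigma_1}\Lambda\tuple ,
\end{equation*}
which simultaneously identifies $n=m+1-\Sigma$, hence $d_n=\beta-\Sigma$ as the number of maximizations, and (after a triangular change of variables diagonalizing the Hessian) identifies $\beta-\Sigma$ as the Morse index. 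This one-variable-at-a-time interlacing argument is precisely the ``interlacing-type estimate'' you say you would need; without it, your second-order Rayleigh--Schr\"odinger computation of the Hessian has no bridge to the eigenvalue index $n$.
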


\begin{remark}
  Taking 2-dimensional space as an example, a non-degenerate minimum
  has Morse index 0, a saddle point has index 1 and a maximum index
  2.  Thus minima correspond to Courant-sharp eigenfunctions, as
  proved for $\R^d$ domains in \cite{HelHofTer_aihp09}.

  The non-degeneracy assumption is introduced to present the theorem
  in its most elegant form.  In fact, in
  section~\ref{sec:mixed_minimax} we will prove a certain mixed
  minimax characterization of the critical points corresponding to the
  eigenfunction.  The nodal deficiency will be equal to the number of
  maximums taken.  We use the non-degeneracy assumption only to go
  from the minimax to the Morse index.  Bypassing this step it is easy
  to see that even a minimum with a degenerate Hessian still
  corresponds to a Courant-sharp eigenfunction.
\end{remark}

\section{Parameterizing the equipartitions}
\label{sec:param}

In this section we prove Theorem~\ref{thm:parametr} by explicitly
constructing the bijection $\Phi_m$ between an open subset of the torus
$\mathbb{T}^\beta = \pp^\beta$ and the set of proper equipartitions
$\cQ$.

\subsection{Description of the map $\Phi_m$}

Denote
$\beta=\beta\left(\Gamma\right)$ and choose $\beta$ edges of $\Gamma$
such that upon their removal we are left with a tree graph.  Choose a
$\beta$-partition $S$ such that it has one vertex on each of the edges
chosen above. This guarantees $\nu\left(S\right)=1$ and that the
single connected component of $S$ is a tree (figure
\ref{fig:graph_and_resulting_tree}).  Denote the partition vertices by
$\left\{ v_{i}\right\} _{i=1}^{\beta}$ and note that each of them
generates two vertices of degree one.  We denote these vertices by
$\left\{ v_{i}^{-},v_{i}^{+}\right\} _{i=1}^{\beta}$ (according to the
vertex $v_{i}$ of $S$ which generated them) and equip each pair with
the following $\delta$-type conditions:
\begin{equation}
  \begin{split}
    \cos\left(\frac{\varphi_{i}}{2}\right) f'\left(v_{i}^{-}\right)
    & =-\sin\left(\frac{\varphi_{i}}{2}\right) \,
    f\left(v_{i}^{-}\right)
    \\
    \cos\left(\frac{\varphi_{i}}{2}\right) f'\left(v_{i}^{+}\right)
    & =\sin\left(\frac{\varphi_{i}}{2}\right) \,
    f\left(v_{i}^{+}\right),
  \end{split}
  \label{eq:delta_conditions_on_tree}
\end{equation}
for some $\varphi_{i}\in\pp$.  We denote the resulting tree graph by
$\tga$.

\begin{center}
  \begin{figure}[ht]
    \begin{centering}
      \hfill{}%
      \begin{minipage}[c]{0.3\columnwidth}%
        \begin{center}
          \includegraphics{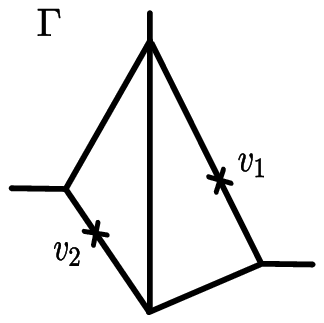}
          \par\end{center}

        \medskip{}

        \begin{center}
          (a)
          \par\end{center}%
      \end{minipage}\hfill{}%
      \begin{minipage}[c]{0.3\columnwidth}%
        \begin{center}
          \includegraphics{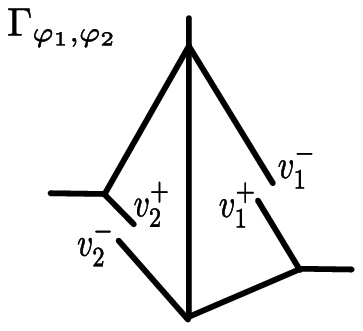}
          \par\end{center}

        \bigskip{}

        \begin{center}
          (b)
          \par\end{center}%
      \end{minipage}\hfill{}
      \par\end{centering}

    \caption{(a) The graph $\Gamma$ with a partition
      $P\in\mathcal{P}_{2}\left(\Gamma\right)$.  ~(b) The resulting tree,
      $\Gamma_{\varphi_{1},\varphi_{2}}$.}

    \label{fig:graph_and_resulting_tree}
  \end{figure}
\end{center}

We describe a parameterization of the space of $m$-equipartitions by a
map $\Phi_{m}$ from $\pp^{\beta}$ to the set of equipartitions
$\cQ$.  The action of this map is as follows.

Let $\tuple\in(-\pi,\pi)^{\beta}$. Examine the eigenspace of the
$(m+1)$-th eigenvalue of the tree $\tga$.  If all of its
eigenfunctions vanish at some vertex (apart from the vertices with the
Dirichlet condition) then the point $\tuple$ is not in the domain of
the definition of the map.

Otherwise, we get by proposition \ref{prop:non_zero_is_simple} that
the $(m+1)$-th eigenvalue is simple.  We can thus apply the nodal
bound \eqref{eq:nodal_zeros_bound} for trees
\cite{PokPryObe_mz96,Schapotschnikow06} (see also \cite{BerKuc_prep10}
for a short proof) and conclude that the $(m+1)$-th eigenfunction has
exactly $m$ zeros on the tree (figure
\ref{fig:tree_and_graph_with_partition_vtcs}(a)).  The location of the
zeros defines a partition $Q\in\cP$ on the original graph $\Gamma$,
see figure \ref{fig:tree_and_graph_with_partition_vtcs}(b).

We now extend the action of the map $\Phi_m$ from $(-\pi,\pi)^{\beta}$
to $\pp^\beta$ by continuity.  Indeed, the relevant eigenvalue is
simple and thus depends analytically on the parameters $\tuple$, see
\cite{BerKuc_prep10}, since $\varphi_j = \pi$ is not particularly
different from any other values of $\varphi_j$ with respect to the
vertex conditions (\ref{eq:delta_conditions_on_tree}).  However, the
varying eigenvalue does not remain the eigenvalue number $m+1$.  In
general we get the $(m-p+1)$-th eigenvalue, where $p$ is the number of
$\varphi_j$ that are equal to $\pi$.  This is because as $\varphi_j
\to \pi$, a zero of the $(m+1)$-th eigenfunction is approaching the
vertex $v_j^-$; at $\varphi_j=\pi$ this zero becomes the boundary
condition at $v_j^-$ and therefore no longer contributes to the nodal
count.

The above extension could have been peformed in the limit $\varphi_j
\to -\pi$ with identical results.  It is thus apparent that $\Phi_m$ is
actually defined on a $\beta$-torus.

\begin{figure}[ht]
\begin{centering}
\hfill{}%
\begin{minipage}[c]{0.3\columnwidth}%
\begin{center}
\includegraphics{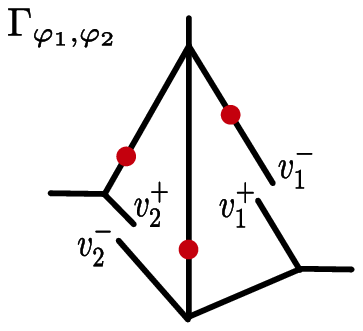}
\par\end{center}

\medskip{}

\begin{center}
(a)
\par\end{center}%
\end{minipage}\hfill{}%
\begin{minipage}[c]{0.3\columnwidth}%
\begin{center}
\includegraphics{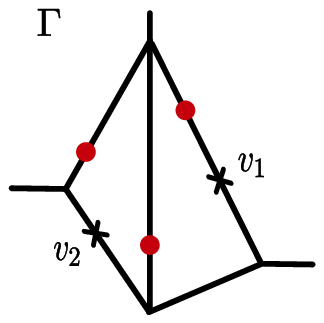}
\par\end{center}

\bigskip{}

\begin{center}
(b)
\par\end{center}%
\end{minipage}\hfill{}
\par\end{centering}

\caption{(a) The tree graph $\Gamma_{\varphi_{1},\varphi_{2}}$ and the
  zeros of its fourth eigenfunction. ~(b) The partition
  $Q=\Phi_{3}\left(\varphi_{1},\varphi_{2}\right)$ on the original
  graph $\Gamma$ (marked with circles).}

\label{fig:tree_and_graph_with_partition_vtcs}
\end{figure}

Let $\tuple$ be in the domain of $\Phi_m$. Then, as we have already
observed in the definition of $\Phi_m$, the $(m+1)$-th eigenvalue of
the tree $\tga$ is simple. Therefore (see \cite{BerKuc_prep10}) it
depends analytically on the vertex conditions, that is on the
parameters $\tuple$. In particular, it remains simple in an open ball
around the initial $\tuple$.  This proves part \ref{i:domain_open} of
Theorem~\ref{thm:parametr}.

In order to distinguish between the points of partitions $S$ and $Q$,
we call the former \emph{section points}. We emphasize that the
location of the section points, $\left\{ v_{i}\right\}
_{i=1}^{\beta}$, is fixed and determines the action of the map
$\Phi_{m}$. The image of the map, $Q=\Phi_{m}\tuple$, gives the other
set of partition vertices and we claim that $Q$ is an
equipartition.

\subsection{The minimal value of $m$}

Next we make precise our requirement on the number $N$ starting from
which the rest of Theorem~\ref{thm:parametr} is guaranteed to
be valid.

\begin{lemma}
  \label{lem:estimateN}
  There exists an $N$ such that for all integers $m>N$ the partition
  $Q=\Phi_m\tuple$ of $\Gamma$ will have $\beta_{\Gamma\setminus Q}=0$
  for every $\tuple$ in the domain of $\Phi_m$.
\end{lemma}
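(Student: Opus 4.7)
\medskip
\noindent\textbf{Proof plan for Lemma \ref{lem:estimateN}.}

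My plan is to first translate the Betti number condition into a sign-change requirement on the tree eigenfunction, then obtain a uniform lower bound on the relevant eigenvalue of $\tga$, and finally invoke Sturm comparison to force zeros on every edge.

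\emph{Step 1: reformulate $\beta_{\Gamma\setminus Q}=0$ topologically.} Let $p$ denote the number of indices with $\varphi_j=\pi$ and let $Q'$ be the set of $m-p$ interior zeros of the relevant eigenfunction on $\tga$ (the remaining $p$ points of $Q$ sit at the section points $v_j$ themselves). Cutting the tree $\tga$ at $Q'$ yields a forest with $m-p+1$ components, namely the nodal domains. The graph $\Gamma\setminus Q$ is obtained from this forest by identifying the pairs $(v_j^-,v_j^+)$ for the $\beta-p$ indices with $\varphi_j\neq\pi$ (for the Dirichlet indices the point $v_j$ is already a cut in $\Gamma\setminus Q$, so no gluing is needed). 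Each identification either merges two distinct forest components, leaving $\beta$ unchanged, or closes a loop inside a single component, increasing $\beta_{\Gamma\setminus Q}$ by one. Thus $\beta_{\Gamma\setminus Q}=0$ if and only if, for every $j$ with $\varphi_j\neq\pi$, the endpoints $v_j^-$ and $v_j^+$ lie in distinct nodal domains of $\tga$, i.e., the eigenfunction changes sign along the path between them.

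\emph{Step 2: uniform lower bound on the relevant eigenvalue.} The eigenvalue selected by $\Phi_m$ is $\lambda_*(\vec\varphi)=\lambda_{m-p(\vec\varphi)+1}(\tga)$, and since $p\leq\beta$ we have $\lambda_*(\vec\varphi)\geq\lambda_{m-\beta+1}(\tga)$. I would establish a Weyl-type upper bound on the eigenvalue counting function of $\tga$ that is uniform over all $\delta$-type vertex conditions: using Dirichlet--Neumann bracketing, the counting function for any choice of vertex conditions differs from the decoupled-edge Dirichlet counting function by at most a constant depending only on $|\Vertices(\tga)|$ and $\|V\|_\infty$. This yields a uniform estimate of the form $\lambda_{m-\beta+1}(\tga)\geq c_1 m^2-c_2$ with constants depending only on $\Gamma$ and $V$, so $\lambda_*(\vec\varphi)\to\infty$ as $m\to\infty$ uniformly in $\vec\varphi$.

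\emph{Step 3: force a zero on every edge.} On each edge $e$ of $\tga$, the eigenfunction satisfies $-f''+(V-\lambda_*)f=0$. Sturm comparison with the constant-coefficient equation $-g''=(\lambda_*-\|V\|_\infty)g$ ensures that any nontrivial solution vanishes at least once on any subinterval of length $\pi/\sqrt{\lambda_*-\|V\|_\infty}$. Choose $N$ so that for all $m>N$ this length is smaller than the minimum edge length of $\tga$; then every edge carries an interior zero, hence in particular the path from $v_j^-$ to $v_j^+$ contains a point of $Q'$. Combined with Step 1, this gives $\beta_{\Gamma\setminus Q}=0$.

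\emph{Main obstacle.} The delicate step is the uniformity in Step 2. The parameters $\alpha_j=\tan(\varphi_j/2)$ are unbounded on the torus, and near $\varphi_j\to\pi$ one eigenvalue of $\tga$ can escape to $+\infty$ while the indexing of $\lambda_*$ shifts, whereas on the opposite branch an attractive $\delta$-vertex eigenvalue drops. The clean way around this is to bypass the parameter dependence entirely by bracketing against fully decoupled Dirichlet/Neumann eigenvalues on the individual edges, which are insensitive to the vertex couplings and give a universal bound.
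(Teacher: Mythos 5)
Your overall strategy coincides with the paper's at the decisive point: both arguments reduce $\beta_{\Gamma\setminus Q}=0$ to the stronger claim that, for $m$ large, the relevant eigenfunction of $\tga$ has a zero on every edge, and both split the work into a uniform lower bound on the eigenvalue plus an oscillation argument on individual edges. The difference is in the tools. For the eigenvalue bound the paper applies Theorem~\ref{thm:glue_many}: since $\Gamma$ is recovered from $\tga$ by $\beta$ gluings (the coefficients $\pm\tan(\varphi_j/2)$ cancel), one gets $\lambda_{m+1}(\tga)\geq\lambda_{m+1-\beta}(\Gamma)$ with a right-hand side independent of $\tuple$ --- this dissolves in one line exactly the uniformity problem you flag as your ``main obstacle''. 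Your Dirichlet--Neumann bracketing can be made to work, but to control the index shift uniformly in the unbounded couplings $\alpha_j^{\pm}=\pm\tan(\varphi_j/2)$ you would end up invoking the same rank-one interlacing (Theorem~\ref{thm:interlacing}) anyway; you get a more quantitative $c_1m^2$ bound that the lemma does not need. For the oscillation step the paper uses monotonicity in the Robin parameter: once $\lambda$ exceeds the largest single-edge Dirichlet ground-state energy $\lambda_D$, the restriction of the eigenfunction to an edge cannot be the ground state of the corresponding Robin problem and hence must vanish. Your Sturm comparison is an equally valid and arguably more elementary substitute (note only that the edges of $\tga$ adjacent to section points are the two halves of edges of $\Gamma$ cut at \emph{fixed} section points, so the minimum edge length of $\tga$ is indeed a fixed positive constant).

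There is, however, one genuine error, albeit a harmless one: the ``if and only if'' in your Step 1 is false. Identifying the pairs $(v_j^-,v_j^+)$ is a sequential process, and even if every pair straddles two distinct components of the forest $\tga\setminus Q'$, an earlier identification can merge the two components containing a later pair, so that the later identification closes a cycle. Concretely, let $\Gamma$ be the theta graph with vertices $u,w$ and parallel edges $e_1,e_2,e_3$, take section points on $e_1$ and $e_2$, and suppose the eigenfunction has a single zero in the interior of $e_3$: both pairs $(v_j^-,v_j^+)$ then straddle the two resulting components, yet $\Gamma\setminus Q$ still contains the cycle $e_1\cup e_2$, so $\beta_{\Gamma\setminus Q}=1$. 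Only the converse implication of your equivalence holds, and the direction you actually invoke at the end of Step 3 is the false one. Fortunately Step 1 is also unnecessary: once Step 3 yields an interior zero of the eigenfunction on every edge of $\Gamma$, no cycle of $\Gamma$ survives the cut at $Q$, hence $\Gamma\setminus Q$ is a forest and $\beta_{\Gamma\setminus Q}=0$ directly. Delete Step 1 and conclude this way, as the paper does.
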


\begin{proof}
  Our proof is constructive: we give an estimate of $N$.  However,
  controlling the \cyclomatic number of a resulting partition is
  hard. Instead we show that for large enough $m$ the partition $Q$ is
  guaranteed at least one point on every edge of $\Gamma$.

  Take an edge $e$ of $\Gamma$ and consider the operator $H$
  restricted to this edge with the Dirichlet conditions at the
  endpoints. Denote the first eigenvalue of $H$ on $e$ by
  $\lambda_{e}$. For example, if the potential $V\equiv0$,
  $\lambda_{e}$ is equal to $\pi/L_{e}$, where $L_{e}$ is the length
  of $e$. Define
  \begin{equation}
    \lambda_D=\max_{e\in\mathcal{E}(\Gamma)}\lambda_{e}.
    \label{eq:Lambda_def}
  \end{equation}
  We are now going to prove that: (i) for large enough $m$ the
  $(m+1)$-th eigenvalue of $\tga$ is larger than $\lambda_D$ for all
  values of $\tuple$ and (ii) if the eigenvalue is larger than
  $\lambda_D$, the corresponding eigenfunction of $\tga$ has a zero on
  every edge of $\Gamma$. These statements combined would finish the
  proof of the lemma.

  To verify statement (i) we observe that, by Theorem
  \ref{thm:glue_many},
  \begin{equation*}
    \lambda_{m-\beta+1}(\Gamma) \leq \lambda_{m+1}\left(\tga\right).
  \end{equation*}
  Therefore we only need to find $N$ such that
  $\lambda_{N-\beta+1}(\Gamma)\geq\lambda_D$ and then (i) is satisfied
  for all $m>N$.

  Before we discuss statement (ii) we note that some of the edges of
  $\Gamma$ are split into two parts in the graph $\tga$ and some care
  should be taken with these edges. Let $f$ be an eigenfunction of
  $\tga$ with an eigenvalue larger than $\lambda_D$. Let $f_{e}$ be the
  restriction of this function to the edge $e$ of $\Gamma$. If the
  edge $e$ contains a section point $v_{i}$ then the function $f_{e}$
  is likely to be discontinuous at $v_{i}$. To fix this we multiply
  the function on the right side of $v_{i}$ by a suitable constant
  (which is $c=f(v_{i}^{-})/f(v_{i}^{+})$). Now we note that due to
  the special structure of
  conditions~(\ref{eq:delta_conditions_on_tree}), the modified
  function $f_{e}$ is not only continuous at $v_{i}$, but is also
  continuously differentiable.

  The obtained function $f_{e}$ satisfies the differential equation
  $Hf_{e}=\lambda f_{e}$ on the edge $e$. It also satisfies homogenous
  boundary conditions at the endpoints $u$ and $v$ of the edge.
  Namely, $f$ satisfies $f'(u)=\alpha_{u}f(u)$ and
  $f'(v)=\alpha_{v}f(v)$, for suitable values of $\alpha_{u}$ and
  $\alpha_{v}$. Since $\lambda$ is greater than the first Dirichlet
  eigenvalue of the edge $e$, the monotonicity of the spectrum with
  respect to the changes of $\alpha$ (see
  Theorem~\ref{thm:interlacing}) implies that $\lambda$ cannot be the
  first eigenvalue of $H$ on the edge $e$, with the boundary
  conditions given above. Therefore $f_{e}$ has at least one zero.
\end{proof}

\begin{remark}
  For the sake of simplicity of the proof we did not pursue the
  sharpest estimates.  To improve them one can, for example, take the
  maximum in the definition of $\Lambda$ over a set of edges, removing
  which turns the graph $\Gamma$ into a tree.
\end{remark}

\subsection{The map $\Phi_m$ produces equipartitions}
\label{sec:map_equi}

Now we take $m$ larger than $N$ from Lemma~\ref{lem:estimateN}.  Let
$f$ be the $(m+1)$-th eigenfunction of $\tga$. We already observed
that $f$ has exactly $m$ nodal points. Hence $f$ corresponds to some
partition $Q\in\cP$ (figure \ref{fig:tree_and_subgraphs}(a)).
Lemma~\ref{lem:estimateN} guarantees that $\beta_{\Gamma\setminus
  Q}=0$ and therefore the subgraphs $\Gamma_{j}$ of the partition are
trees.  We need to prove that the groundstate energy of every
$\Gamma_j$ is equal to the same value
$\lambda_{m+1}\left(\tga\right)$.  We will do it by considering the
restriction of the function $f$ to $\Gamma_{j}$ and modifying it into
the groundstate of $\Gamma_{j}$.

The restriction of the function $f$ to $\Gamma_{j}$ satisfies all the
vertex conditions on $\Gamma_{j}$ and also satisfies the eigenvalue
equation $Hf=\lambda f$ (with
$\lambda=\lambda_{m+1}\left(\tga\right)$) everywhere apart from those
section points that happen to lie on $\Gamma_{j}$.  At these points
the function $f$ is likely to be discontinuous (figure
\ref{fig:tree_and_subgraphs}(b)).

\begin{figure}[ht]
  \begin{centering}
    \hfill{}%
    \begin{minipage}[c]{0.3\columnwidth}%
      \begin{center}
        \includegraphics{figures/tree_with_partition_vtcs_and_section_pts}
        \par\end{center}

      \medskip{}

      \begin{center}
        (a)
        \par\end{center}%
    \end{minipage}\hfill{}%
    \begin{minipage}[c]{0.4\columnwidth}%
      \begin{center}
        \includegraphics{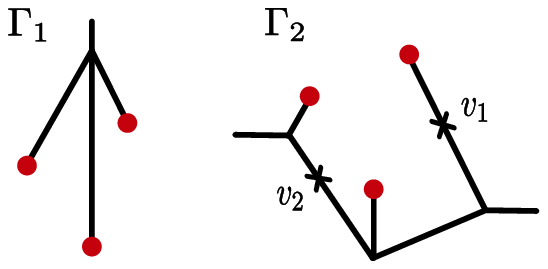}
        \par\end{center}

      \bigskip{}

      \begin{center}
        (b)
        \par\end{center}%
    \end{minipage}\hfill{}
    \par\end{centering}

  \caption{(a) The tree graph $\Gamma_{\varphi_{1},\varphi_{2}}$ and
    the zeros of its fourth eigenfunction, $f$. ~(b) The subgraphs
    of $Q=\Phi_{3}\left(\varphi_{1},\varphi_{2}\right)$ and the
    points $v_{1},v_{2}$ at which $f$ is likely to be
    discontinuous.}

  \label{fig:tree_and_subgraphs}
\end{figure}

We will fix this in a manner similar to the proof of
Lemma~\ref{lem:estimateN}.  Locate a discontinuity point $v_{i}$. The
function satisfies the conditions \eqref{eq:delta_conditions_on_tree}
on the left and right of $v_{i}$. They, in particular, imply that $f$
is not zero at $v_{i}^{\pm}$ . Let
\begin{equation*}
  c=f(v_{i}^{+})/f(v_{i}^{-}).
\end{equation*}
Then, multiplying the function $f$ on the ``left'' part of the tree
$\Gamma_{j}$ (i.e.\ the one connected to $v_{i}^{+}$) by $c$ we make
the resulting function continuous at $v_{i}$. By the special structure
of the conditions \eqref{eq:delta_conditions_on_tree} the new function
is also continuously differentiable at $v_{i}$. Note that we are able
to perform this operation only ``on one side'' of $v_{i}$ (without
affecting $f$ values on the other side) because $\Gamma_{j}$ is a tree
and the vertex $v_{i}$ separates it into two components.

However, multiplication by a constant does not spoil any of the
properties of $f$ at other locations, namely $f$ satisfying vertex
conditions and the eigenvalue equation. By fixing the discontinuities
one by one we arrive at a new function $\tilde{f}$ which has
sufficient regularity properties to be an eigenfunction of the
operator $H$ on $\Gamma_{j}$. Since it has no zeros on $\Gamma_{j}$,
it is the groundstate and therefore
$\lambda=\lambda_{m+1}\left(\tga\right)$ is the first eigenvalue of
$\Gamma_{j}$. We thus obtain that $Q$ is an equipartition and that
\begin{equation}
  \label{eq:LambdaPhi}
  \Lambda\left(Q\right) = \Lambda\left(\Phi_m\tuple\right)
  = \lambda_{m+1}\left(\tga\right).
\end{equation}
As the eigenvalue $\lambda_{m+1}\left(\tga\right)$ is analytic with
respect to the parameters $\varphi_1,\ldots,\varphi_\beta$, we have
also verified part \ref{i:smooth} of Theorem~\ref{thm:parametr}.

\subsection{The map $\Phi_m$ is bijective}

The proof of the bijectivity follows the already established pattern:
from an eigenfunction on one graph (either $\Gamma$ or $\tga$) we
construct an eigenfunction on the other by matching the function in a
smooth way.

We start by remarking that the map $\Phi_m$ is one-to-one. Indeed, in
section~\ref{sec:map_equi}, starting from a point $\tuple$ we
constructed the groundstates on all the connected components, $\left\{
  \Gamma_{j}\right\} $, of the partition $\Phi_m\tuple$.  Suppose that
another point $(\varphi_{1}',\ldots\varphi_{\beta}')$ leads to the
same partition. Then, for every component $\Gamma_{j}$ of the
partition, the same construction leads to a groundstate on
$\Gamma_{j}$. But the groundstate is uniquely determined, up to a
constant, by $\Gamma_{j}$. And for any section point $v_{i}$ that
belongs to $\Gamma_{j}$, the value of $\varphi_{i}$ is uniquely
determined by the corresponding groundstate eigenfunction via
\begin{equation}
  \tan\frac{\varphi_{i}}{2}=\frac{f'(v_{i})}{f(v_{i})}.
  \label{eq:phi_from_f}
\end{equation}
Therefore $\varphi_{i}'=\varphi_{i}$ for every section point $i$.

To show that the map is onto we find, for every equipartition $Q$, the
point $\tuple$ that is mapped to it.  Let $Q\in\cQ$, $m>N$, so that
$\beta_{\Gamma\setminus Q}=0$ by Lemma~\ref{lem:estimateN}.  Let
$\left\{ \Gamma_{j}\right\} $ be the subgraphs of the partition $Q$
and $\left\{ f_{j}\right\} $ their corresponding normalized
groundstates (figure \ref{fig:subgraphs_and_tree}(a)),
\begin{equation*}
  \left.H\right|_{\Gamma_{j}\,}f_{j}
  = \lambda_{1}\left(\Gamma_{j}\right)f_{j}=\Lambda\left(Q\right)f_{j}.
\end{equation*}
As mentioned above, we determine $\varphi_{i}$ from the groundstate of
the partition subgraph $\Gamma_{j}\ni v_{i}$ by
formula~\eqref{eq:phi_from_f}.  To show that the map $\Phi$ sends
$\tuple$ to $Q$ we will construct the $(m+1)$-th eigenfunction of
$\tga$ and verify that its zeros coincide with the partition vertices
of $Q$. Define a function $f$ on $\tga$ by piecing together the
groundstates of the subgraphs $\Gamma_{j}$, i.e.\
$\left.f\right|{}_{\Gamma_{j}}=f_{j}$ (figure
\ref{fig:subgraphs_and_tree}(b)).  This function already goes
considerable distance towards being an eigenfunction of
$\tga$. Indeed, it satisfies the eigenvalue equation $Hf=\lambda f$
and the vertex conditions on $\tga$ at every point except the
partition vertices of $Q$. At the partition vertices $f$ is continuous
(and equal to zero), but might not be differentiable.  We remark that
the function $f$ will satisfy the conditions at vertices $v_{i}^{\pm}$
because of the special way we defined these conditions --- the values
$\tuple$ were especially chosen to fit the function $f$.

\begin{figure}[ht]
  \begin{centering}
    \hfill{}%
    \begin{minipage}[c]{0.4\columnwidth}%
      \begin{center}
        \includegraphics{figures/two_subgraphs_with_section_pts}
        \par\end{center}

      \medskip{}

      \begin{center}
        (a)
        \par\end{center}%
    \end{minipage}\hfill{}%
    \begin{minipage}[c]{0.3\columnwidth}%
      \begin{center}
        \includegraphics{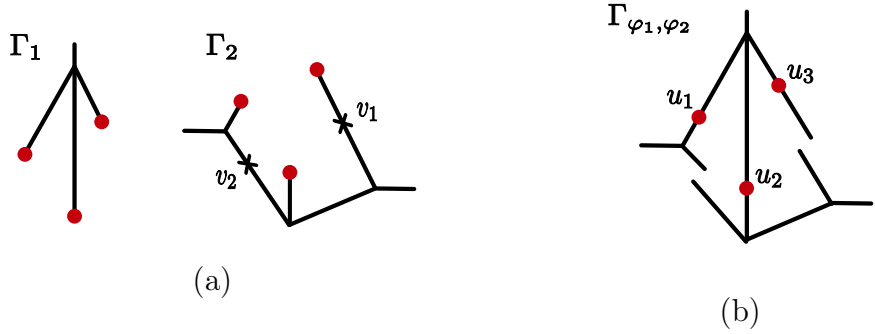}
        \par\end{center}

      \bigskip{}

      \begin{center}
        (b)
        \par\end{center}%
    \end{minipage}\hfill{}
    \par\end{centering}

  \caption{(a) The subgraphs of $Q$, on which the ground states
    $f_{1},f_{2}$ are taken.~(b) The tree graph
    $\Gamma_{\varphi_{1},\varphi_{2}}$ with the points at which the
    function $f$ constructed from piecing together $f_{1},f_{2}$
    might not be differentiable.}

  \label{fig:subgraphs_and_tree}
\end{figure}

We now modify $f$ so that it becomes continuously differentiable at
the partition vertices as well (at the expense of losing the equality
$f(v_i^+)=f(v_i^-)$, a property that we do not need on the tree
$\tga$).  Choosing a partition vertex $u$ we multiply the function $f$
on the right side of it by the suitably chosen constant
\begin{equation*}
  c=f'(u-0)/f'(u+0).\
\end{equation*}
We can perform this operation ``on one side'' of $u$ because $\tga$ is
a tree and the vertex $u$ separates it into two components.
Performing this operation at every partition vertex we will fix all
discontinuities but will not break any other properties of $f$. This
modified $f$ (with a slight abuse of notation) is an eigenfunction on
the tree $\tga$. It is non-zero on all vertices of the tree and
therefore, by Proposition \ref{prop:non_zero_is_simple} its eigenvalue
is simple. It also has exactly $m$ zeros so, by
equation~(\ref{eq:nodal_zeros_bound}), it must be the $(m+1)$-th
eigenfunction of $\tga$. This concludes the proof of
Theorem~\ref{thm:parametr}.

\section{Proof of theorem~\ref{thm:critical_point}}

We begin by recalling that every equipartition $Q\in\cQ$ is an image
of some point $\tuple$ under the map $\Phi_m$. The action of the
map utilizes the $(m+1)$-th eigenfunction of the tree graph $\tga$.
We denote this normalized eigenfunction by $f$ and have
\begin{equation}
  \Lambda\tuple=\lambda_{m+1}\left(\tga\right)=h[f,f],
  \label{eq:Lambda_lambda_and_h}
\end{equation}
where $h[f,f]$ is the quadratic form \eqref{eq:quadratic_form}. In
what follows we will also use the sesqui-linear form,
\begin{equation*}
  h[f,g] = \sum_{e}\int f'(x)\overline{g'(x)}dx
  + \sum_{e} \int V(x)f(x)\overline{g(x)} dx
  + \sum_{v} \alpha_{v} f(v) \overline{g(v)}.
\end{equation*}
The sum in the last term above is over all the vertices of $\tga$
(with the exception of the Dirichlet vertices).  In particular, for the
vertices $\left\{ v_{j}^{-},v_{j}^{+}\right\} _{j=1}^{\beta}$, we have
\begin{equation*}
  \alpha_{j}^{-}  =-\tan\frac{\varphi_{j}}{2} \qquad \qquad
  \alpha_{j}^{+}  =\tan\frac{\varphi_{j}}{2},
\end{equation*}
by the definition of the tree $\tga$, see
\eqref{eq:delta_conditions_on_tree}.

\subsection{Critical points and eigenfunctions}

A critical point is a point where the gradient of
$\Lambda\left(\vec{\varphi}\right)$ is equal to zero.  We
differentiate $\Lambda$ using \eqref{eq:Lambda_lambda_and_h},
\begin{equation}
  \frac{\partial}{\partial\varphi_{j}}\Lambda
  = \frac{\partial}{\partial\varphi_{j}} \left( h\left[f,f\right]
  \right)
  = h_{\varphi_{j}}\left[f,f\right]
  + h\left[ f_{\varphi_j}, f\right]
  + h\left[f, f_{\varphi_{j}}\right],
  \label{eq:critical_point}
\end{equation}
where the subscript $\cdot_{\varphi_{j}}$ stands for the partial
derivative with respect to $\varphi_{j}$.

We now show that the last two terms in the right-hand side of
\eqref{eq:critical_point} vanish.  Recall that $f$ denotes the
normalized $(m+1)$-th eigenfunction of $\tga$.  From the normalization
of $f$ we get
\begin{equation*}
  \frac{\partial}{\partial\varphi_{j}}
  \left\langle f,f\right\rangle = 0
  \qquad\Rightarrow\qquad
  \left\langle f_{\varphi_{j}},f\right\rangle =0.
\end{equation*}
On the other hand, $f$ is an eigenfunction, therefore (see
\cite{BerKuc_prep10} for details)
\begin{equation}
  h\left[f_{\varphi_{j}},f\right] =
  \left\langle f_{\varphi_{j}},Hf\right\rangle
  =\lambda
  \left\langle f_{\varphi_{j}},f \right\rangle = 0.
  \label{eq:vanishing_term1}
\end{equation}
Since $H$ is self-adjoint, we also have
$h\left[f,f_{\varphi_{j}}\right]=0$.

Equation \eqref{eq:critical_point} now reduces to
\begin{align}
  \label{eq:quadratic_form_derivative}
  \frac{\partial}{\partial\varphi_{j}}\Lambda
  &= \frac{1}{2\cos^{2}\left( \frac{\varphi_{j}}{2} \right)}
  \left( \left| f\left(v_{j}^{+}\right) \right|^{2}
    - \left| f\left(v_{j}^{-}\right) \right|^{2}\right) \\
  &= \frac{1}{2\sin^{2}\left( \frac{\varphi_{j}}{2} \right)}
  \left( \left| f'\left(v_{j}^{+}\right) \right|^{2}
    - \left| f'\left(v_{j}^{-}\right) \right|^{2}\right),
  \label{eq:quadratic_form_derivative_Dir}
\end{align}
where, in the last step we used the boundary conditions at $v_j^\pm$,
(\ref{eq:delta_conditions_on_tree}).  The last expression is useful
when $\cos(\varphi_j/2)=0$, but in all other cases we will use
\eqref{eq:quadratic_form_derivative}.

Now let $\tilde{Q}\in\cQ$ be a bipartite proper equipartition which
is a critical point of $\Lambda$.  Let $\vec{\varphi}=\ttuple$ be
the point which is mapped to $\tilde{Q}$ and $\tilde{f}$ the
corresponding normalized $(m+1)$-th eigenfunction on the tree
$\Gamma_{\tf_{1},\ldots,\tf_{\beta}}$.
Assume now that $\cos(\varphi_j/2)\neq0$.  The condition that
$\nabla \Lambda = 0$ at $f=\tilde{f}$ implies, via equation
\eqref{eq:quadratic_form_derivative}, that
\begin{equation*}
  \left|\tilde{f}\left(v_{j}^{+}\right)\right|=\left|\tilde{f}\left(v_{j}^{-}\right)\right|
\end{equation*}
for all $j$.

As $\tilde{Q}$ is bipartite, there is an even number of partition
points on every cycle.  Since the function $\tilde{f}$ changes sign
at every partition point, following any cycle from $v_j^+$ to
$v_j^-$ we deduce that the signs of
$\tilde{f}\left(v_{j}^{+}\right)$ and
$\tilde{f}\left(v_{j}^{-}\right)$ must agree.  Therefore
\begin{equation}
  \label{eq:f_tilde_cont_diff}
  \tilde{f}\left(v_{j}^{+}\right) = \tilde{f}\left(v_{j}^{-}\right)
  \qquad \mbox{ and } \qquad
  \tilde{f}'\left(v_{j}^{+}\right) = -\tilde{f}'\left(v_{j}^{-}\right),
\end{equation}
where we used conditions \eqref{eq:delta_conditions_on_tree} to
deduce the second equality from the first.  Note that the
derivatives are taken in the direction away from the vertices
$v_j^{\pm}$, therefore the function $\tilde{f}$, if considered on
the original graph $\Gamma$ is both continuous and continuously
differentiable at all section points $v_j$.  Since $\tilde{f}$
satisfies the eigenvalue equation, it is an eigenfunction of
$\Gamma$.

Note that if $\varphi_j=-\pi$ for some $j$, we can similarly deduce
equation \eqref{eq:f_tilde_cont_diff} starting with
\eqref{eq:quadratic_form_derivative_Dir} and again using
bipartiteness.

To prove the second direction of the statement, we start with an
eigenfunction of $\Gamma$.  It induces an equipartition on $\Gamma$
and the corresponding values of $\tf_j$ can be read off
equation~(\ref{eq:delta_conditions_on_tree}) (see also
equation~(\ref{eq:phi_from_f})).  To prove that the point $\ttuple$
is critical, we note that $\tilde{f}$ is smooth at the section
points, equation \eqref{eq:f_tilde_cont_diff} is obviously satisfied
and \eqref{eq:quadratic_form_derivative} implies that the gradient
is zero.

\subsection{A mixed minimax}
\label{sec:mixed_minimax}

Let $\tilde{f}$ be the $n$-th eigenfunction on $\Gamma$ with the
eigenvalue $\tilde{\lambda}$.  Let $\tilde{Q} \leftrightarrow
\vec{\varphi}=\ttuple$ be the corresponding $m$-equipartition which,
by part \ref{enu:critical_point_part1} is a critical point.

Assume for now that $\tilde{f}$ is non-zero at the section points.
Then $\tilde{f}$ is the $(m+1)$-th eigenfunction of $\ttga$, i.e.,
$\tilde{\lambda}=\lambda_{m+1}\left(\ttga\right)$. We now apply
\eqref{eq:our_interlacing_join} to get
\begin{equation}
  \lambda_{m}\left(\Gamma_{\tf_{2},\ldots\tf_{\beta}}\right)
  \leq
  \lambda_{m+1}\left(\Gamma_{\varphi_{1},\tf_{2},\ldots,\tf_{\beta}}\right)
  \leq
  \lambda_{m+1}\left(\Gamma_{\tf_{2},\ldots\tf_{\beta}}\right),
  \label{eq:interlacing_relation}
\end{equation}
where $\Gamma_{\tf_{2},\ldots\tf_{\beta}}$
is the graph obtained from
$\Gamma_{\varphi_{1},\tf_{2},\ldots,\tf_{\beta}}$
by gluing the vertices $v_{1}^{-}$ and $v_{1}^{+}$ together into a
single vertex $v_{1}$.

The inequalities above hold for all values $\varphi_{1}\in\pp$.  In
addition, since we know that $\tilde{f}$ is an eigenfunction of
$\Gamma$, it is also an eigenfunction of
$\Gamma_{\tf_{2},\ldots\tf_{\beta}}$.
Therefore, when $\varphi_{1}=\tf_{1}$ one of the
inequalities of \eqref{eq:interlacing_relation} should become an
equality.  Namely, there exists some $\sigma_{1}\in\left\{
  0,1\right\}$, such that
\begin{equation}
  \lambda_{m+1-\sigma_{1}}\left(
    \Gamma_{\tf_{2},\ldots\tf_{\beta}}\right)
  = \lambda_{m+1}\left(
    \Gamma_{\tf_{1},\tf_{2},\ldots,\tf_{\beta}}\right).
  \label{eq:eigenvalue_equality1}
\end{equation}
Carrying the last argument by induction we get that
\begin{equation*}
  \lambda_{m+1-\sigma_{1}-\sigma_{2}\ldots-\sigma_{\beta}}\left(\Gamma\right)
  =\lambda_{m+1}\left(\Gamma_{\tf_{1},\tf_{2},\ldots,\tf_{\beta}}\right),
\end{equation*}
for some values
$\sigma_{1},\sigma_{2},\ldots,\sigma_{\beta} \in \left\{ 0,1\right\}$.
We now wish to characterize these values.

Consider first the case $\sigma_1=0$.  Recalling that inequality
\eqref{eq:interlacing_relation} holds for all values
$\varphi_{1}\in\pp$ and combining it with
\eqref{eq:eigenvalue_equality1} allows one to deduce
\begin{equation*}
  \label{eq:max_case}
  \lambda_{m+1}\left(\Gamma_{\tf_{2},\ldots\tf_{\beta}}\right) =
  \max_{\varphi_{1}\in\pp}
  \lambda_{m+1}\left(\Gamma_{\varphi_{1},\tf_{2},\ldots,\tf_{\beta}}\right)
  \qquad \mbox{ if }\sigma_1=0.
\end{equation*}
On the other hand, we similarly get
\begin{equation*}
  \label{eq:min_case}
  \lambda_{m}\left(\Gamma_{\tf_{2},\ldots\tf_{\beta}}\right) =
  \min_{\varphi_{1}\in\pp}
  \lambda_{m+1}\left(\Gamma_{\varphi_{1},\tf_{2},\ldots,\tf_{\beta}}\right)
  \qquad \mbox{ if }\sigma_1=1.
\end{equation*}
Introducing the notation $\opt_{1}=\min$ and $\opt_{0}=\max$ we can
write both equations as
\begin{equation}
  \lambda_{m+1-\sigma_{1}}\left(\Gamma_{\tf_{2},\ldots\tf_{\beta}}\right)
  = \underset{\varphi_{1}\in\pp}{\opt_{\sigma_{1}}}
  \lambda_{m+1}\left(\Gamma_{\varphi_{1},\tf_{2},\ldots,\tf_{\beta}}\right).
  \label{eq:first_optimization}
\end{equation}
The same reasoning gives
\begin{equation}
  \lambda_{m+1-\sigma_{1}-\sigma_{2}}\left(\Gamma_{\tf_{3},\ldots\tf_{\beta}}\right)
  = \underset{\varphi_{2}\in\pp}{\opt_{\sigma_{2}}}
  \lambda_{m+1-\sigma_{1}}\left(\Gamma_{\varphi_{2},\tf_{3}\ldots,\tf_{\beta}}\right).
  \label{eq:second_optimization}
\end{equation}

Next we observe that \eqref{eq:first_optimization} holds not only
for $\varphi_{2}=\tf_{2}$ but also in some neighborhood $I_{2}$ of
$\tf_{2}$.  In fact, it holds for all values of $\varphi_{2}$ if the
value of $\sigma_1$ is allowed to depend on $\varphi_{2}$, but we
would like to keep it constant.  This allows us to substitute
\eqref{eq:first_optimization} into the right-hand side of
\eqref{eq:second_optimization} and obtain
\begin{equation}
  \lambda_{m+1-\sigma_{1}-\sigma_{2}}\left(\Gamma_{\tf_{3},\ldots\tf_{\beta}}\right)
  = \underset{\varphi_{2}\in I_{2}}{\opt_{\sigma_{2}}}
  \underset{\varphi_{1}\in\pp}{\opt_{\sigma_{1}}}
  \lambda_{m+1}\left(\Gamma_{\varphi_{1},\varphi_{2},\tf_{3}\ldots,\tf_{\beta}}\right).
  \label{eq:two_optimizations}
\end{equation}
The generalization is now straightforward,
\begin{align}
  \lambda_{m+1-\Sigma}(\Gamma)
  &= \underset{\varphi_{\beta}\in I_{\beta}}{\opt_{\sigma_{\beta}}}
  \cdots \underset{\varphi_{2}\in I_{2}}{\opt_{\sigma_{2}}}\,
  \underset{\varphi_{1}\in\pp}{\opt_{\sigma_{1}}}
  \lambda_{m+1}\left(\Gamma_{\varphi_{1},\varphi_{2}\ldots,\varphi_{\beta}}\right),
  \nonumber \\
  &= \underset{\varphi_{\beta}\in I_{\beta}}{\opt_{\sigma_{\beta}}}
  \cdots \underset{\varphi_{2}\in I_{2}}{\opt_{\sigma_{2}}}\,
  \underset{\varphi_{1}\in\pp}{\opt_{\sigma_{1}}}\Lambda\tuple.
  \label{eq:all_optimizations-2}
\end{align}
where $\Sigma=\sigma_{1}+\ldots+\sigma_{\beta}$.

We therefore get that the eigenfunction number $n=m + 1 - \Sigma$ of
$\Gamma$ has $m$ nodal points and therefore
$m+1-\left(\beta_{\Gamma}-\beta\left(Q\right)\right)=m+1-\beta$
nodal domains; the latter follows from \eqref{eq:mu_and_nu_exact}
with $\beta_{\Gamma}=\beta$ and $\beta\left(Q\right)=0$. The nodal
deficiency of this eigenfunction is therefore
$d_n=d_{m+1-\Sigma}=\beta-\Sigma$, which equals the number of
parameters with respect to which we maximize in
(\ref{eq:all_optimizations-2}).

We call expression \eqref{eq:all_optimizations-2} the \emph{mixed
  minimax characterization} of the eigenvalue since some of the
optimizations are minimums and some are maximums.  These, in general,
do not commute, so the minimax cannot be ``unmixed''.
Characterization \eqref{eq:all_optimizations-2} is the main result of
this subsection.

\subsection{Minimax and the Morse index}
\label{sec:morse}

We end the proof of Theorem~\ref{thm:critical_point} by showing that
if the critical point $\ttuple$ is non-degenerate then its Morse index
also equals $\beta-\Sigma$.  In other words, we show that the
deficiency equals the number of negative eigenvalues of the Hessian of
$\Lambda$ at $\ttuple$.

For each value of $\varphi_{2},\ldots,\varphi_{\beta}$
the first optimization of (\ref{eq:all_optimizations-2}),
is achieved for a certain $\varphi_1 =
\psi_{1}\left(\varphi_{2},\ldots,\varphi_{\beta}\right)$, i.e.
\begin{equation*}
  \underset{\varphi_{1}\in\pp}{\opt_{\sigma_{1}}}\Lambda\tuple =
  \Lambda\left(\psi_1(\varphi_{2},\ldots,\varphi_{\beta}),
    \varphi_{2},\ldots,\varphi_{\beta}\right).
\end{equation*}
The function $\psi_1$ satisfies
$\tf_{1}=\psi_{1}\left(\tf_{2},\ldots,\tf_{\beta}\right)$ and
defines a manifold $N_1 \subset \pp^{\beta}$.

On this manifold, for each value of
$\varphi_{3},\ldots,\varphi_{\beta}$, the optimization with respect
to $\varphi_2$ is achieved at $\varphi_2 =
\psi_{2}\left(\varphi_{3},\ldots,\varphi_{\beta}\right)$, which
defines a submanifold $N_2\subset N_1$.  Proceeding in the same
manner, we define a sequence $\left\{ \psi_{j}\right\}
_{j=1}^{\beta}$ of functions and a chain
\begin{equation*}
  \pp^{\beta} \supset N_1 \supset N_2 \ldots \supset N_\beta = \ttuple.
\end{equation*}

Wishing to diagonalize the Hessian at the critical point, $\ttuple$,
we introduce a new set of variables,
$\left(\bar{\varphi}_{1},\ldots,\bar{\varphi}_{\beta}\right)$,
\begin{equation*}
  \bar{\varphi}_{j}=\varphi_{j}-\psi_{j}\left(\varphi_{j+1},\ldots,\varphi_{\beta}\right),
\end{equation*}
for which we get that
\begin{equation*}
  N_{j}\subset\left(0,\ldots0,\bar{\varphi}_{j+1},\ldots,\bar{\varphi}_{\beta}\right).
\end{equation*}
We note that the meaning of the manifolds in the changed variables
remains the same: the extremum on $N_j$ when varying
$\bar{\varphi}_{j+1}$ is achieved when $\bar{\varphi}_j=0$, that is
on $N_{j+1}$.  The extremal property implies that
\begin{equation*}
  \frac{\partial\Lambda}{\partial \bar{\varphi}_{j}}\at_{N_j}=0.
\end{equation*}
Differentiating this identity with respect to $\bar{\varphi}_{k}$
with $k>j$ (so that we remain on $N_j$) we obtain
\begin{equation*}
  \frac{\partial^2\Lambda}{\partial\bar{\varphi}_k
    \partial\bar{\varphi}_j}\at_{N_j} = 0
  \qquad k>j.
\end{equation*}
Since the critical point belongs to all $N_j$ we conclude that its
Hessian is a triangular matrix.  In fact, it is diagonal since
the Hessian is symmetric. The signs of its diagonal entries at the
critical point are known due to the optimization process
\begin{equation}
  \textrm{sign}
  \frac{\partial^{2}\Lambda}{\partial\left(\bar{\varphi}_{j}\right)^{2}}\at_{N_j}
  = 2\sigma_{j}-1,
  \label{eq:sign_of_second_derivative}
\end{equation}
where we used the fact that the critical point is non-degenerate.
The Morse index of the critical point is independent of the choice
of coordinates. We therefore deduce from
\eqref{eq:sign_of_second_derivative} that the Morse index equals to
the number of zeros among $\sigma_j$, that is $\beta-\Sigma$.

\section{Other scenarios}
\label{sec:other_scenarios}

\subsection{Low eigenvalues}

In the discussion so far we restricted our attention to equipartitions
whose parts have no cycles, i.e.\ $\beta_{\Gamma\setminus Q}=0$.
Indeed, Theorems \ref{thm:parametr} and \ref{thm:critical_point}
ignore all other equipartitions. The justification for this is given
by Lemma \ref{lem:estimateN}, which shows that equipartitions with
$\beta_{\Gamma\setminus Q}>0$ do not appear if we restrict ourselves
to high enough eigenvalues.  However, with some extra work it is
possible to extend the treatment to all proper equipartitions.

The parameterization of $\cQ$ (for large enough $m$) was done by
choosing the location of section points, $\left\{ v_{i}\right\}
_{i=1}^{\beta}$, which determine the action of the map $\Phi_m$.  We
now take a different approach, which allows us to relax the
restriction on the value of $m$ at the cost of sacrificing the global structure
of the map $\Phi_m$.  Given an equipartition $Q\in\cQ$, we position
the section points depending on $Q$.  We recursively add a section
point to an edge that contains at least one partition vertex of $Q$ as
long as the new section point does not disconnect the graph.  It is
easy to see that this will result in $k :=
\beta_{\Gamma}-\beta_{\Gamma\setminus Q}$ section points, i.e.\ in
general less than before.  As a result, each cycle of $\Gamma$ will
have a section point if and only if it has a partition vertex of $Q$.
We note that if $\beta_{\Gamma}=\beta_{\Gamma\setminus Q}$ we add no
section point.  One can see that in this case the equipartition $Q$ is
isolated.

We can now define the map $\Phi_m$ from some open set
$D\subset\pp^{k}$ to some neighborhood of $n$-equipartitions
around $Q$. The map acts in the same manner as before (see the
discussion preceeding Theorem \ref{thm:parametr}). For each
$\vec{\varphi}\in D$, $\Phi_m\vec{\left(\varphi\right)}$ is the
equipartition which corresponds to the zeros of the $(m+1)$-th
eigenfunction of $\Gamma_{\vec{\varphi}}$. The validity of the map is
proved in the following theorem.

\begin{theorem}
  \label{thm:loc_parametr}
  Let $Q$ be an $n$-equipartition on a finite connected graph $\Gamma$
  and let $k = \beta_{\Gamma}-\beta_{\Gamma\setminus Q}$.  Denote by
  $\nn(Q)$ the set of proper equipartitions that have the same number
  of partition points on each edge as $Q$.  Then there exists some
 open set $D\subset\pp^{k}$ which is bijectively mapped by
  $\Phi_m$ to $\nn(Q)$.
\end{theorem}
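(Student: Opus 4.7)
The proof parallels that of Theorem~\ref{thm:parametr}, but with section points chosen depending on $Q$ and the analysis carried out only in a neighborhood of the unique $\vec{\varphi}_Q \in \pp^{k}$ corresponding to $Q$.  The recursive placement of section points terminates after exactly $k = \beta_{\Gamma}-\beta_{\Gamma\setminus Q}$ steps: each successful addition lowers the Betti number of $\Gamma_{\vec\varphi}$ by one, and the process stops precisely when every cycle of $\Gamma$ that carries a partition vertex of $Q$ also carries a section point.

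I then define $\vec{\varphi}_Q$ and show that $\Phi_m$ sends a neighborhood of it into $\nn(Q)$.  The coordinates of $\vec{\varphi}_Q$ are read off from~(\ref{eq:phi_from_f}) using, at each section point $v_i$, the groundstate $f_j$ on the subgraph $\Gamma_j$ of $Q$ that contains $v_i$.  Pasting these groundstates together on $\Gamma_{\vec{\varphi}_Q}$ and fixing derivative discontinuities at the partition vertices of $Q$ by multiplicative constants, exactly as in the surjectivity argument of section~\ref{sec:map_equi}, produces an eigenfunction of $\Gamma_{\vec{\varphi}_Q}$ with eigenvalue $\Lambda(Q)$ vanishing precisely at the partition vertices of $Q$ and non-vanishing at every non-Dirichlet vertex, so Proposition~\ref{prop:non_zero_is_simple} yields simplicity of this eigenvalue.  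Let $n_Q$ denote its index; I take $D$ to be the open neighborhood of $\vec{\varphi}_Q$ on which the $n_Q$-th eigenvalue of $\Gamma_{\vec\varphi}$ remains simple, its eigenfunction is non-vanishing on every non-Dirichlet vertex, and has the same number of zeros on each edge as $Q$.  Openness follows from analytic dependence of simple eigenvalues and their eigenfunctions on the parameters~\cite{BerKuc_prep10} together with continuity of zero locations.

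It remains to verify bijectivity onto an open neighborhood of $Q$ in $\nn(Q)$.  For $\vec\varphi \in D$, the restriction of the relevant eigenfunction to each component $\Gamma_j'$ of $\Phi_m(\vec\varphi)$ satisfies the Schr\"odinger equation with Dirichlet data at the cuts and has no interior zeros; applying the discontinuity-fixing trick of section~\ref{sec:map_equi} at the interior section points produces the groundstate of $\Gamma_j'$, so $\Phi_m(\vec\varphi)$ is an equipartition, and membership in $\nn(Q)$ is built into the definition of $D$.  Injectivity is exactly as in section~\ref{sec:map_equi}: each $\varphi_i$ is uniquely recovered from the groundstate of the subgraph containing $v_i$ via~(\ref{eq:phi_from_f}).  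For surjectivity onto a neighborhood of $Q$, applying~(\ref{eq:phi_from_f}) to the groundstates of any $Q'\in\nn(Q)$ sufficiently close to $Q$ yields a $\vec{\varphi}_{Q'}\in D$ whose image under $\Phi_m$ is $Q'$, again via the same pasting construction.

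The main obstacle, compared with Theorem~\ref{thm:parametr}, is that the subgraphs of $Q$ may carry cycles, so the nodal bound~(\ref{eq:nodal_zeros_bound}) applied to $\Gamma_{\vec\varphi}$ no longer pins down the number of zeros of a given eigenfunction, and one loses the global control over the eigenvalue index that was available on the tree $\tga$.  This is precisely why the parameterization is only local: restricting to $D$ fixes the per-edge zero count, which is an open condition by continuity, and lets the rest of the argument proceed in parallel with section~\ref{sec:map_equi}.
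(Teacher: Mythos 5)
Your construction of the section points and of $\vec{\varphi}_Q$ matches the paper's, but there is a genuine gap at the point where you give up on identifying the eigenvalue index. You write that the nodal bound ``no longer pins down the number of zeros'' on $\Gamma_{\vec\varphi}$ and therefore work with an unidentified index $n_Q$ on a small neighborhood of $\vec{\varphi}_Q$. This misses the whole point of the recursive placement of the section points. That placement guarantees that every cycle of $\Gamma$ carrying a partition vertex of $Q$ (equivalently, of any $Q'\in\nn(Q)$) is cut by a section point, so that $\beta_{\Gamma_{\vec\varphi}}=\beta_{\Gamma\setminus Q'}$ and, crucially, cutting $\Gamma_{\vec\varphi}$ further at $Q'$ destroys no remaining cycle. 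Hence the \emph{improved} bound \eqref{eq:nodal_zeros_bound_improv} --- not the weaker \eqref{eq:nodal_zeros_bound} --- collapses to the equality $\mu_n=n-1$ on $\Gamma_{\vec\varphi}$, and the reconstructed eigenfunction, having exactly $m$ zeros, is forced to be eigenfunction number $m+1$. Without this step you have not shown that the map you are analyzing is $\Phi_m$ at all, since $\Phi_m$ is defined via the $(m+1)$-th eigenfunction of $\Gamma_{\vec\varphi}$; and you only obtain a bijection onto a neighborhood of $Q$ inside $\nn(Q)$, whereas the theorem asserts surjectivity onto all of $\nn(Q)$ (the paper's argument runs the read-off/reconstruction for \emph{every} $Q'\in\nn(Q)$, with the index pinned uniformly by the equality above, so no ``sufficiently close to $Q$'' restriction is needed).

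A secondary but concrete error: you invoke Proposition~\ref{prop:non_zero_is_simple} to get simplicity of the eigenvalue on $\Gamma_{\vec{\varphi}_Q}$, but that proposition is stated for \emph{trees}, and here $\Gamma_{\vec\varphi}$ has Betti number $\beta_{\Gamma\setminus Q}$, which is positive whenever $k<\beta_\Gamma$. Simplicity in this setting cannot be obtained by that citation and must be handled differently (it enters as a hypothesis of \eqref{eq:nodal_zeros_bound_improv} as well, so it is a point the sketchy argument genuinely has to address rather than quote away).
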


\begin{proof} [Sketch of the proof]
  The proof follows the same procedure as the proof of Theorem
  \ref{thm:parametr}.  For each partition $Q' \in \nn(Q)$ we read off
  the values of $\varphi$ at the section points from the ground states
  on the corresponding subgraph of the partition.  We then reconstruct
  the eigenfunction on $\Gamma_{\vec{\varphi}}$ using the fact that
  whenever there is a zero (and thus matching is required) on a cycle
  of $\Gamma$, this cycle has been cut by a section point (and thus
  matching is possible).

  To see that the obtained eigenfunction of $\Gamma_{\vec{\varphi}}$
  is indeed eigenvalue number $m+1$, we use the fact that
  \begin{equation*}
    \beta_{\Gamma_{\vec{\varphi}}} = \beta_{\Gamma\setminus Q'},
  \end{equation*}
  which turns inequality~\eqref{eq:nodal_zeros_bound_improv} into an
  equation.  The eigenfunction in question has $m$ zeros which makes
  it the eigenfunction number $m+1$.

  Openness of the set $D$ follows from continuity of the position of
  zeros as functions of parameters $\vec{\varphi}$.  That is, we can
  change values of $\vec{\varphi}$ so that the zeros will remain on
  their corresponding edges. Finally, the injectivity of the map is
  verified as in the proof of Theorem \ref{thm:parametr}, part
  \ref{i:bijection}.
\end{proof}

Having established the existence of the map $\Phi_m$ locally around
$Q$, we can use it to define the functional $\Lambda$ on a neighborhood
of $Q$. We then obtain a result identical to Theorem
\ref{thm:critical_point} by following the same proof, as only local
properties of $\Lambda$ were used there.


\subsection{Improper partitions}

In this section we explain the restriction of our results to proper
partitions by giving examples of anomalous behavior of improper ones.
The unifying feature of our examples is the existence of
eigenfunctions vanishing on the vertices of the graph or even on the
entire edges.  These eigenfunctions arise because of the presence of
symmetries in the graphs considered below.

Consider a star graph (see Fig.~\ref{fig:improper}) with 3 edges of
lengths $l-\varepsilon$, $l$, $l$, where $\varepsilon$ is small.  The
vertex conditions at the center are Neumann-Kirchhoff (equation
\eqref{eq:delta_deriv} with $\alpha=0$) and at the outside vertices
are Dirichlet.

We inspect the partitions with one nodal point.  There are three cases to
consider, the point is on the shorter edge, on one of the longer edges
and at the central vertex.  Denote by $\delta$ the distance of the
nodal point to the central vertex.  The corresponding values of
$\Lambda$ are then
\begin{equation*}
  \Lambda_1 = \frac\pi{l-\varepsilon-\delta}, \qquad
  \Lambda_2 = \frac\pi{l-\delta}, \qquad
  \Lambda_3 = \frac\pi{l-\varepsilon}.
\end{equation*}
The infimum of the above values as $\delta$ varies is $\pi/l$, but it is
not achieved since at $\delta=0$ (nodal point on the longer edge) the
functional $\Lambda$ is discontinuous.  We note that this problem
cannot be cured by seeking minimum over the set of partitions with two
parts rather than ``partitions by one point''.  This approach restores
the continuity of $\Lambda$ by removing the offending point from the
domain of definition but the infimum is still not achieved.  Another
anomaly of the above example is that there are no equipartitions with
two parts or, equivalently, with one partition point.

\begin{figure}[ht]
  \begin{centering}
    \hfill
    \begin{minipage}[c]{0.3\columnwidth}%
      \begin{center}
        \includegraphics{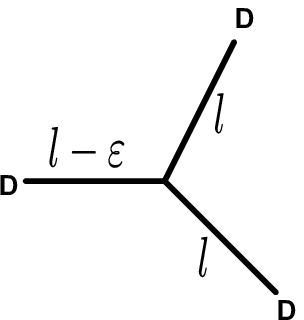}
      \end{center}

      \medskip{}
      \centerline{(a)}
    \end{minipage}
    \hfill{}%
    \begin{minipage}[c]{0.3\columnwidth}%
      \begin{center}
        \includegraphics{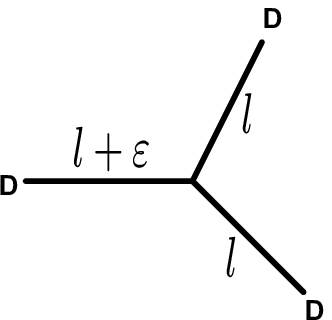}
      \end{center}

      \medskip

      \centerline{(b)}
   \end{minipage}
    \hfill{}
    \begin{minipage}[c]{0.3\columnwidth}%
      \vspace{1cm}
      \begin{center}
        \includegraphics{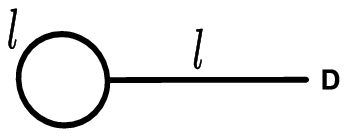}
      \end{center}

      \vspace{1cm}

      \centerline{(c)}
   \end{minipage}
    \hfill{}
  \end{centering}

  \caption{Three examples of graphs with troublesome (improper)
    critical partitions.}

 \label{fig:improper}
\end{figure}

Consider now a slight modification of the above example, a 3-star
graph with edge lengths $l+\varepsilon$, $l$, $l$.  In the set of
partitions by one point the infimum is again not achieved.  The same
applies to the set of all partitions into 2 parts.  However in the set
of partitions into 3 parts the minimum is achieved by the configuration
with a nodal point at the center.  But this is not an equipartition,
violating a would-be analogue of Theorem \ref{thm:minima_of_Lambda}.

In a more sophisticated example of a loop of length $l$ with a edge
of the same length attached to it, a local maximum among
equipartitions with one nodal point is the equipartition with the
point at the central vertex.  The nearby equipartitions are obtained
as the nodal point moves left or right along the loop.\footnote{Such
  partitions have only one part, but they still fit
  Definition~\ref{def:partition}} It can be shown, however, that the
functional $\Lambda$ is not differentiable at the maximum point.

To summarize, the above examples illustrate the necessity of
restricting our attention to the generic case (for example, with
respect to edge length variation) of proper eigenfunctions.

\section{Discussion}

We have investigated the connection between zeros of eigenfunctions of
a quantum graph and ``optimal'' partitions of the said graph.  This
point of view is not new in spectral theory.  In a series of papers
\cite{HHT1,HHT2,HHT3}, which culminated in \cite{HelHofTer_aihp09},
Schr\"odinger operators on domains in $\mathbb{R}^{2}$ were studied
using partitions.  Other energy functionals have also been considered,
see, for example, \cite{ConTerVer_cvpde05,CafLin_jsc07}.  However, it
was the minimizers of the maximum functional,
equation~(\ref{eq:Lambda_def}) that were shown in
\cite{HelHofTer_aihp09} to correspond to certain eigenfunctions, namely
Courant-sharp ones.  However, there is only a finite number of such
eigenfunctions for each domain.

We use the same approach to study the connection between partitions
and eigenfunctions on quantum graphs.  We discover that it is
beneficial to restrict the domain of definition of the functional to
equipartitions, where the maximum functional becomes differentiable.
Upon this restriction, all eigenfunctions of a quantum graph can be
characterized as critical points of the energy functional.  Furthermore,
the Morse index of such a critical point turns out to be equal to the
nodal deficiency of the eigenfunction. In the case that the critical
point is a minimum, we get that the Morse index is zero and therefore
the eigenfunction is Courant-sharp, providing an analogue of the
results obtained by Helffer et al \cite{HelHofTer_aihp09}.

The general nature of our result suggests that analogous theory can be
developed for domains in $\rr^d$ by considering a restriction of the
functional $\Lambda$ to the (now infinite-dimensional) set of
equipartitions.  With the help of the insight gained from the present
results, a theorem analogous to Theorem~\ref{thm:critical_point} has
been established in $\rr^d$ under the assumption that nodal lines
(surfaces) do not intersect \cite{BerKucSmi_prep11}.  A result in the
spirit of Theorem~\ref{thm:critical_point} is also available on
discrete graphs (work in progress).

We finish this discussion with a conjecture.  The nodal deficiency was
determined as the number of maximization in the sequence of $\beta$
operations, see equation~(\ref{eq:all_optimizations-2}).  We
conjecture that asymptotically for large $m$ the choice of minimum or
maximum become ``independent'' and ``random'', in the sense that the
empirical distribution of the deficiencies $d_n$ approaches binomial
distribution with $p=1/2$ and $n=\beta$.

\section{Acknowledgments}

The authors are supported by EPSRC (RB: grant number EP/H028803/1, US:
grant number EP/G021287), NSF (GB: grant DMS-0907968) and BSF (grant
2006065).  The research has been inspired by a talk about the results
of \cite{HelHofTer_aihp09} given by T.~Hoffmann-Ostenhof.  We are
grateful to Peter Kuchment for suggesting a simpler proof for section
\ref{sec:morse}.  GB and HR thank Weizmann Institute, where most of
the work was done, for warm hospitality.  US acknowledges support from
the Minerva Center for Nonlinear Physics, the Einstein (Minerva)
Center at the Weizmann Institute and the Wales Institute of
Mathematical and Computational Sciences (WIMCS).

\appendix

\section{Interlacing theorems for quantum graphs}
\label{sec:interlacing_theorems}

The following proposition, which forms a part of corollary
5.2 from \cite{BerKuc_prep10}, discusses the connection between
manifestations of spectral degeneracy.

\begin{proposition}\label{prop:non_zero_is_simple}
 Let $T$ be a tree with $\delta$-type conditions at its vertices,
  with the exclusion that Dirichlet conditions are not allowed on
  internal vertices. If the eigenvalue $\lambda$ of $T$ has an
  eigenfunction that is non-zero on all internal vertices of $T$ then
  $\lambda$ is simple.
\end{proposition}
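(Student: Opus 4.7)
The plan is to induct on the number of edges of $T$. The base case is a single edge with separated leaf conditions at both endpoints; here the hypothesis is vacuous and simplicity is the standard fact that a regular Sturm--Liouville problem on an interval has simple eigenvalues.

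For the inductive step, I would pick any leaf $v_0$ of $T$ and let $e=(v_0,v_1)$ be its unique incident edge, so that $v_1$ is an internal vertex. The one-sided condition at $v_0$ selects a one-dimensional subspace of the two-dimensional solution space of $-u''+Vu=\lambda u$ on $e$; call a generator $\phi$. Since $f|_e$ is a nonzero scalar multiple of $\phi$ and the hypothesis gives $f(v_1)\neq 0$, one must have $\phi(v_1)\neq 0$. This makes it possible to construct a reduced tree $T':=T\setminus\{v_0,e\}$, carrying the same $\delta$-coefficients as $T$ everywhere except at $v_1$, where it receives the finite real number
\begin{equation*}
\alpha'_{v_1}:=\alpha_{v_1}-\frac{\phi'(v_1)_e}{\phi(v_1)}.
\end{equation*}
A short computation then shows that restriction $g\mapsto g|_{T'}$ is a linear bijection from the $\lambda$-eigenspace of $H$ on $T$ onto the $\lambda$-eigenspace of the modified operator on $T'$, the inverse being extension by $g|_e:=(g(v_1)/\phi(v_1))\phi$ (or by zero when $g(v_1)=0$). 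Under this bijection, $f|_{T'}$ is still nonzero on every internal vertex of $T'$, so the inductive hypothesis applies to $T'$ and yields simplicity of $\lambda$ there, which transfers back to $T$.

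The step that requires real care is the degenerate case $\deg_T(v_1)=2$, where removing the leaf edge turns $v_1$ itself into a leaf of $T'$. For the reduction to produce a tree in our admissible class, one needs the new coefficient $\alpha'_{v_1}$ to be finite, so that any vertex which is \emph{internal} in $T'$ inherits a non-Dirichlet $\delta$-condition; this is exactly what $\phi(v_1)\neq 0$ guarantees. Beyond that, the only nontrivial verification is that the extension formula fulfils the original $\delta$-condition at $v_1$, which is a one-line check straight from the definition of $\alpha'_{v_1}$.
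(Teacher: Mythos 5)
Your argument is correct. Note that the paper does not actually prove this proposition: it is quoted verbatim from \cite{BerKuc_prep10} (corollary 5.2 there), so there is no internal proof to compare against. Your leaf-peeling induction is a legitimate self-contained alternative to citing that reference. The key points all check out: on a leaf edge $e=(v_0,v_1)$ the boundary condition at $v_0$ (Robin or Dirichlet) cuts the two-dimensional solution space of $-u''+Vu=\lambda u$ down to $\mathrm{span}(\phi)$, so \emph{every} element of the $\lambda$-eigenspace restricts on $e$ to a multiple of $\phi$; the hypothesis $f(v_1)\neq 0$ forces $\phi(v_1)\neq 0$, which makes the transferred coefficient $\alpha'_{v_1}=\alpha_{v_1}-\phi'(v_1)_e/\phi(v_1)$ finite and keeps $T'$ inside the admissible class even when $v_1$ becomes a leaf; restriction is injective precisely because $g(v_1)=0$ forces $g|_e=0$ via $\phi(v_1)\neq 0$, and your extension formula inverts it. One merit of your route is that it isolates exactly where the two hypotheses enter: the tree structure guarantees a leaf edge on which eigenfunctions live in a one-dimensional space (this is what fails on a cycle), and non-vanishing at internal vertices is what prevents the reduced condition from degenerating into an internal Dirichlet condition, which could decouple the graph and produce multiplicity. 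A small presentational point: you should state explicitly that the internal vertices of $T'$ form a subset of those of $T$ and that $f|_{T'}\neq 0$ because $f(v_1)\neq 0$, so the inductive hypothesis genuinely applies to the pair $(T',\lambda)$; both facts are immediate but are the hinge of the induction.
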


We next bring three interlacing theorems from \cite{BerKuc_prep10},
namely theorems 5.1, 5.3 and 5.4.

\begin{theorem}
  \label{thm:interlacing}
  Let $\Graph_{\alpha'}$ be the graph obtained from the graph
  $\Graph_{\alpha}$ by changing the coefficient of the condition at
  vertex $v$ from $\alpha$ to $\alpha'$.  If $-\infty < \alpha <
  \alpha' \leq \infty$ (where $\alpha' = \infty$ corresponds to the
  Dirichlet condition), then
  \begin{equation}
    \label{eq:interlacing_monotone}
    \lambda_n(\Graph_\alpha) \leq \lambda_n(\Graph_{\alpha'}) \leq
    \lambda_{n+1}(\Graph_\alpha).
  \end{equation}
  If the eigenvalue $\lambda_n(\Graph_{\alpha'})$ is simple and it's
  eigenfunction $f$ is such that either $f(v)$ or $\sum f'(v)$ is
  non-zero then the inequalities can be made strict,
  \begin{equation}
    \label{eq:interlacing_strong}
    \lambda_{n}(\Graph_{\alpha}) < \lambda_n(\Graph_{\alpha'})
    < \lambda_{n+1}(\Graph_{\alpha}).
  \end{equation}
\end{theorem}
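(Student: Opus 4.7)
The plan is to prove Theorem~\ref{thm:interlacing} via a rank-one perturbation argument applied to the quadratic form \eqref{eq:quadratic_form}, combined with the Rayleigh--Ritz minimax principle \eqref{eq:minimax}. First I would observe the following structural relationship between the forms $h_\alpha$ and $h_{\alpha'}$. When $\alpha' < \infty$, both forms have the same domain $\widetilde{H}^{1}(\Graph)$ (with identical Dirichlet constraints at other vertices), and they differ by the non-negative rank-one quadratic form $(\alpha'-\alpha)|f(v)|^2$. When $\alpha' = \infty$, the domain of $h_{\alpha'}$ is the codimension-one subspace $\{f : f(v)=0\}$ of the domain of $h_\alpha$, on which $h_{\alpha'}$ coincides with $h_\alpha$. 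In both cases, the key fact is that $h_{\alpha'}[f,f] \geq h_\alpha[f,f]$ on any vector where $h_{\alpha'}$ is defined, with equality precisely when $f(v)=0$.

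For the lower bound $\lambda_n(\Gamma_\alpha) \leq \lambda_n(\Gamma_{\alpha'})$, I would invoke \eqref{eq:minimax} directly. Any $n$-dimensional trial subspace $X$ admissible for $h_{\alpha'}$ is also admissible for $h_\alpha$, and on $X$ we have $h_\alpha \leq h_{\alpha'}$; taking max then min yields the inequality. For the upper bound $\lambda_n(\Gamma_{\alpha'}) \leq \lambda_{n+1}(\Gamma_\alpha)$, I would use the standard codimension-one trick. Fix an $(n+1)$-dimensional subspace $Y$ in the domain of $h_\alpha$ on which $\max_{f \in Y,\|f\|=1} h_\alpha[f,f] = \lambda_{n+1}(\Gamma_\alpha)$. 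The kernel of the linear functional $f \mapsto f(v)$ intersects $Y$ in a subspace $Y_0$ of dimension at least $n$. Every $f \in Y_0$ lies in the domain of $h_{\alpha'}$ (including the Dirichlet case), and on $Y_0$ we have $h_{\alpha'}[f,f] = h_\alpha[f,f]$. Applying the minimax bound for $\lambda_n(\Gamma_{\alpha'})$ using an $n$-dimensional subspace of $Y_0$ gives $\lambda_n(\Gamma_{\alpha'}) \leq \lambda_{n+1}(\Gamma_\alpha)$.

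For the strict inequalities \eqref{eq:interlacing_strong}, I would argue by contradiction under the simplicity hypothesis on $\lambda_n(\Gamma_{\alpha'})$. Suppose $\lambda_n(\Gamma_{\alpha'}) = \lambda_n(\Gamma_\alpha)$. Examining the equality case of the minimax argument above, the eigenfunction $f$ realizing $\lambda_n(\Gamma_{\alpha'})$ must also achieve the minimax for $\lambda_n(\Gamma_\alpha)$, which forces it to be an eigenfunction of the operator with $\alpha$-conditions as well. Then $f$ simultaneously satisfies $\sum f'(v) = \alpha f(v)$ and $\sum f'(v) = \alpha' f(v)$, so $(\alpha'-\alpha)f(v) = 0$, giving $f(v)=0$ and hence $\sum f'(v) = 0$, contradicting the hypothesis. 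A symmetric argument handles the equality case $\lambda_n(\Gamma_{\alpha'}) = \lambda_{n+1}(\Gamma_\alpha)$: in this case the eigenfunction associated to $\lambda_n(\Gamma_{\alpha'})$ would arise as a minimizing element in the subspace $Y_0$ above, which again must be an eigenfunction of $h_\alpha$ (at the level $\lambda_{n+1}$), leading to the same contradiction.

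The routine parts are the minimax manipulations. The main obstacle is the strict inequality argument, specifically making precise that equality in the minimax bounds forces $f$ to be a joint eigenfunction; this relies on simplicity of $\lambda_n(\Gamma_{\alpha'})$ to pin down the extremizer uniquely (up to scalar) and on the continuous dependence of the quadratic form on $\alpha$ to rule out spurious equality. The $\alpha' = \infty$ case of strict inequality requires slightly separate bookkeeping since then the Dirichlet eigenfunction automatically satisfies $f(v) = 0$, so the hypothesis there must be read as $\sum f'(v) \neq 0$.
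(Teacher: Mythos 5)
The paper does not actually prove Theorem~\ref{thm:interlacing}: it is imported verbatim from \cite{BerKuc_prep10} (quoted there as Theorem 5.1), so there is no in-paper proof to compare against. Your proposal is, however, essentially the standard argument for this result and is correct in outline. The form-theoretic setup is right: for finite $\alpha'$ the two forms share the domain and $h_{\alpha'}-h_{\alpha}=(\alpha'-\alpha)|f(v)|^{2}\geq 0$, while for $\alpha'=\infty$ the form domain is the codimension-one subspace $\{f(v)=0\}$ on which the forms agree; the monotonicity $\lambda_n(\Gamma_\alpha)\leq\lambda_n(\Gamma_{\alpha'})$ and the codimension-one trick for $\lambda_n(\Gamma_{\alpha'})\leq\lambda_{n+1}(\Gamma_\alpha)$ both follow cleanly from \eqref{eq:minimax}.

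The one place that needs tightening is the equality-case analysis for \eqref{eq:interlacing_strong}. The assertion that ``the eigenfunction realizing $\lambda_n(\Gamma_{\alpha'})$ must also achieve the minimax for $\lambda_n(\Gamma_\alpha)$ and is therefore a joint eigenfunction'' is not automatic; what makes it work is the standard lemma that an $n$-dimensional subspace $X$ with $\max_{f\in X,\|f\|=1}h[f,f]=\lambda_n$ must contain a $\lambda_n$-eigenfunction (intersect $X$ with the orthogonal complement of the first $n-1$ eigenfunctions). Applying this with $X$ the span of the first $n$ eigenfunctions of $\Gamma_{\alpha'}$ (respectively, with an $n$-dimensional subspace of your $Y_0$ for the upper-bound case) produces a function $g$ that is simultaneously an eigenfunction of both problems at the common eigenvalue; one then checks $h_\alpha[g,g]=h_{\alpha'}[g,g]$ forces $g(v)=0$, hence $\sum g'(v)=\alpha g(v)=0$, and simplicity of $\lambda_n(\Gamma_{\alpha'})$ identifies $g$ with $f$, contradicting the hypothesis. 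This is exactly the chain you sketch, so nothing is missing conceptually, but the ``pinning down'' step should be written via that lemma rather than asserted. Your appeal to ``continuous dependence of the form on $\alpha$'' is not needed anywhere --- the argument is purely variational --- and your remark that in the Dirichlet case the hypothesis effectively reads $\sum f'(v)\neq 0$ is correct and worth keeping.
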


Next interlacing theorem is

\begin{theorem}
  \label{thm:interlacing_join}
  Let $\Graph$ be a compact (not necessarily connected) graph.  Let
  $v_{0}$ and $v_{1}$ be vertices of the graph $\Graph$ endowed with
  the $\delta$-type conditions, i.e.
  \begin{equation*}
    \begin{cases}
      f\mbox{ is continuous at }v_{j}\mbox{ and}\\
      \sum\limits
      _{e\in\Edges_{v_{j}}}\frac{df}{dx_{e}}(v_{j})=\alpha_{j}f(v_{j}),\qquad
      j=0,1.
    \end{cases}
  \end{equation*}
  Arbitrary self-adjoint conditions are allowed at all other vertices
  of $\Graph$.

  Let $\Graph'$ be the graph obtained from $\Graph$ by gluing the
  vertices $v_{0}$ and $v_{1}$ together into a single vertex $v$, so
  that $\Edges_{v}=\Edges_{v_{0}}\cup\Edges_{v_{1}}$, and endowed with
  the $\delta$-type condition
  \begin{equation}
    \sum\limits_{e\in\Edges_{v}}\frac{df}{dx_{e}}(v)=(\alpha_{0}+\alpha_{1})f(v).
    \label{eq:glue_conditions}
  \end{equation}

  Then the eigenvalues of the two graphs satisfy the
  inequalities
  \begin{equation}
    \lambda_{n}(\Graph)\leq\lambda_{n}(\Graph')\leq\lambda_{m+1}(\Graph).
    \label{eq:interlacing_join}
  \end{equation}
\end{theorem}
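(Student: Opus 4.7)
The plan is to prove the interlacing inequality via the minimax principle by comparing the quadratic forms on $\Graph$ and $\Graph'$. The key observation is that the quadratic form domain of $\Graph'$ sits inside that of $\Graph$ as a subspace of codimension exactly one: the only additional constraint when passing from $\Graph$ to $\Graph'$ is the continuity requirement $f(v_0)=f(v_1)$ at the glued vertex. All other vertex conditions are identical, and crucially, on the constrained subspace the two quadratic forms agree. Indeed, when $f(v_0)=f(v_1)=f(v)$, the boundary contribution on $\Graph$ is $\alpha_0 |f(v_0)|^2+\alpha_1|f(v_1)|^2=(\alpha_0+\alpha_1)|f(v)|^2$, which matches the corresponding boundary contribution of $\Graph'$ by \eqref{eq:glue_conditions}. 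Note that the $\delta$-type conditions are natural conditions of the form \eqref{eq:quadratic_form}, so they only appear through the boundary sum; they are not enforced directly in the form domain.

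First I would write down both forms in the style of \eqref{eq:quadratic_form} and let $\Domain \subset \widetilde{H}^{1}(\Graph)$ denote the form domain of $\Graph$. Then the form domain of $\Graph'$ is
\begin{equation*}
\Domain' = \{ f \in \Domain : f(v_0)=f(v_1)\},
\end{equation*}
a subspace of codimension one given by a single continuous linear functional on $\Domain$, and $h_{\Graph'}[f,f]=h_{\Graph}[f,f]$ for all $f\in\Domain'$.

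For the lower bound $\lambda_n(\Graph)\leq \lambda_n(\Graph')$, I would apply \eqref{eq:minimax}: any $n$-dimensional subspace $X\subset\Domain'$ is also an $n$-dimensional subspace of $\Domain$, and the maxima of $h_{\Graph'}$ and $h_{\Graph}$ over unit-norm elements of $X$ coincide. Taking the infimum over a smaller class can only increase, yielding $\lambda_n(\Graph)\leq \lambda_n(\Graph')$.

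For the upper bound $\lambda_n(\Graph')\leq \lambda_{n+1}(\Graph)$, I would use the standard codimension-one trick. Let $X\subset \Domain$ be an $(n+1)$-dimensional subspace approximately realizing $\lambda_{n+1}(\Graph)$ through \eqref{eq:minimax}; then $X\cap\Domain'$ has dimension at least $n$, since $\Domain'$ is cut out by a single linear equation. Restricting to this subspace gives an admissible $n$-dimensional test space for the minimax characterization of $\lambda_n(\Graph')$, on which $h_{\Graph'}=h_{\Graph}$ and therefore the max is bounded by $\lambda_{n+1}(\Graph)$. Letting $X$ realize $\lambda_{n+1}(\Graph)$ yields the desired bound. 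No step is especially hard here; the one place where care is needed is ensuring that when the two vertices $v_0,v_1$ lie on different connected components of $\Graph$, the codimension-one identification of $\Domain'$ inside $\Domain$ still holds and the form identity on $\Domain'$ is unaffected, which is the case because the form only sees $v_0$ and $v_1$ through the local boundary sum.
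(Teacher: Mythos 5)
Your argument is correct and is essentially the standard proof of this interlacing result: the paper itself quotes Theorem~\ref{thm:interlacing_join} from \cite{BerKuc_prep10} without proof, and the argument given there is exactly your form-domain comparison (gluing imposes the single continuity constraint $f(v_0)=f(v_1)$, on which the two quadratic forms coincide, followed by the codimension-one minimax trick). The only point worth flagging is that the index $m+1$ in \eqref{eq:interlacing_join} is a typo for $n+1$, which you have correctly read it as.
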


In the current manuscript we apply the above theorem with
$\alpha_{0}=-\alpha_{1}$ and a slight adaptation of
\eqref{eq:interlacing_strong}:
\begin{equation}
  \lambda_{n}(\Graph')\leq\lambda_{m+1}(\Graph)\leq\lambda_{m+1}(\Graph').
  \label{eq:our_interlacing_join}
\end{equation}

Repeated applications of the theorem above gives the following result,
which is a quote of theorem 4.6 from \cite{BerKuc_prep10}.

\begin{theorem}
  \label{thm:glue_many}
  Let the graph $\Gamma'$ be obtained from $\Gamma$ by $k$
  identifications, for example by gluing vertices
  $v_{0},v_{1},\ldots,v_{k}$ into one, or pairwise gluing of $k$ pairs
  of vertices. Each identification results also in adding the
  $\alpha_{j}$ parameters in the vertex $\delta$-type conditions, as
  in \eqref{eq:glue_conditions}. Then
  \begin{equation*}
    \lambda_{n}(\Graph)\leq\lambda_{n}(\Graph')\leq\lambda_{n+k}(\Graph).
  \end{equation*}
\end{theorem}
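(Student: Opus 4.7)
The plan is induction on $k$. The base case $k=1$ is exactly Theorem~\ref{thm:interlacing_join}: gluing one pair of vertices (or, equivalently, identifying two vertices into one with summed $\delta$-coefficients) produces a new graph $\Gamma'$ whose spectrum interlaces with $\Gamma$'s via $\lambda_n(\Gamma)\leq\lambda_n(\Gamma')\leq\lambda_{n+1}(\Gamma)$, which is the desired inequality for $k=1$.

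For the inductive step, I would decompose the $k$ identifications into a sequence of single pair-identifications. Any of the allowed operations — gluing $v_0,v_1,\ldots,v_k$ into one vertex, or pairwise gluing of $k$ pairs — can be realized as $k$ successive single-pair gluings: in the first case, glue $v_0$ with $v_1$ to form $v_{01}$, then glue $v_{01}$ with $v_2$, and so on; in the second case, perform the pairings one at a time. Call the resulting intermediate graphs $\Gamma=\Gamma_0,\Gamma_1,\ldots,\Gamma_k=\Gamma'$. The crucial observation is that the hypothesis of Theorem~\ref{thm:interlacing_join} is preserved at each step: since \eqref{eq:glue_conditions} produces a new vertex which again carries a $\delta$-type condition (with summed coefficient), every $\Gamma_j$ is a compact graph of the type required, and the cumulative effect on the $\delta$-coefficients at the end agrees precisely with the coefficients stated in the theorem.

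Applying Theorem~\ref{thm:interlacing_join} to each step $\Gamma_{j-1}\to\Gamma_j$ yields
\begin{equation*}
\lambda_n(\Gamma_{j-1})\leq\lambda_n(\Gamma_j)\leq\lambda_{n+1}(\Gamma_{j-1}).
\end{equation*}
Iterating the left inequality telescopes to $\lambda_n(\Gamma)\leq\lambda_n(\Gamma')$, giving the lower bound at once. For the upper bound I would chain the right inequalities by shifting the index up by one at each step:
\begin{equation*}
\lambda_n(\Gamma')=\lambda_n(\Gamma_k)\leq\lambda_{n+1}(\Gamma_{k-1})\leq\lambda_{n+2}(\Gamma_{k-2})\leq\cdots\leq\lambda_{n+k}(\Gamma_0)=\lambda_{n+k}(\Gamma),
\end{equation*}
where at each inequality I apply Theorem~\ref{thm:interlacing_join} to the pair $(\Gamma_{k-j-1},\Gamma_{k-j})$ with index $n+j$.

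There is no real obstacle here; the only item worth flagging is the bookkeeping that ensures the $\delta$-type structure is closed under single pair gluings (so that the base case can be reapplied) and that the additive rule for $\alpha$-coefficients composes correctly across the sequence of identifications. Both are immediate from the form of \eqref{eq:glue_conditions}. Consequently the induction closes and the theorem follows.
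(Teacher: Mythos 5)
Your proof is correct and matches the paper's intended argument exactly: the paper offers no detailed proof, stating only that the result follows by ``repeated applications'' of Theorem~\ref{thm:interlacing_join}, which is precisely the induction and telescoping you carry out (note the paper's $\lambda_{m+1}$ in \eqref{eq:interlacing_join} is a typo for $\lambda_{n+1}$, as you correctly read it). Your added bookkeeping that single pair-gluings preserve the $\delta$-type structure and compose the $\alpha$-coefficients additively is the right thing to check and is indeed immediate from \eqref{eq:glue_conditions}.
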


\bibliographystyle{abbrv}

\bibliography{bk_bibl_robin,additional,Nodal_Domains_general,Nodal_Domains_of_Quantum_Graphs,Qunatum_Graphs,partitions}

\end{document}